\newtheorem{theorem}{Theorem}[section]
\newtheorem{lemma}[theorem]{Lemma}
\newtheorem{corollary}[theorem]{Corollary}
\newtheorem{proposition}[theorem]{Proposition}
\newcommand{\multiline}[1]{%
	\begin{tabularx}{\dimexpr\linewidth-\ALG@thistlm}[t]{@{}X@{}}
		#1
	\end{tabularx}
}
\begin{document}

\title{Fast Byzantine Gathering with Visibility in Graphs}

\author{Avery Miller and
Ullash Saha\\University of Manitoba, Winnipeg, MB, Canada}

\date{}
\maketitle              
\begin{abstract}
We consider the gathering task by a team of $m$ synchronous mobile robots in a graph of $n$ nodes. Each robot has an identifier (ID) and runs its own deterministic algorithm, i.e., there is no centralized coordinator. We consider a particularly challenging scenario: there are $f$ Byzantine robots in the team that can behave arbitrarily, and even have the ability to change their IDs to any value at any time. There is no way to distinguish these robots from non-faulty robots, other than perhaps observing strange or unexpected behaviour. The goal of the gathering task is to eventually have all non-faulty robots located at the same node in the same round. It is known that no algorithm can solve this task unless there at least $f+1$ non-faulty robots in the team. In this paper, we design an algorithm that runs in polynomial time with respect to $n$ and $m$ that matches this bound, i.e., it works in a team that has exactly $f+1$ non-faulty robots. In our model, we have equipped the robots with sensors that enable each robot to see the subgraph (including robots) within some distance $H$ of its current node. We prove that the gathering task is solvable if this visibility range $H$ is at least the radius of the graph, and not solvable if $H$ is any fixed constant.
\end{abstract}

\section{Introduction}
Mobile robots play a vital role in real-life applications such as military surveillance, search-and-rescue, environmental monitoring, transportation, mining, infrastructure protection, and autonomous vehicles. In networks, the robots/agents move from one location to another to collectively complete a task, and might all need to meet at one location in order to share information or start their next task. Therefore, gathering becomes a fundamental problem for mobile robots in networks.

Gathering is hard to accomplish even in a fault-free system, as the robots may not have any planned location where to meet, nor any initial information about the topology of the network. Moreover, in a distributed system, each robot runs its own deterministic algorithm to make decisions, i.e., there is no centralized coordinator. We want a deterministic algorithm that can be run by each robot, and eventually, they will gather at a single node which is not fixed in advance. Additionally, we consider a particularly challenging scenario in which some of the robots are Byzantine: such robots do not follow our installed algorithm and can behave arbitrarily. We can think of these robots as malicious robots in our system, i.e., they have been compromised by outsiders/hackers, and, knowing the algorithm we intend to run, they can behave in ways that attempt to mislead the non-faulty robots into making incorrect decisions. Moreover, non-faulty robots do not know which of the robots (or even how many of the robots) are Byzantine, because all robots look identical. We might face this type of scenario in real-world applications when attackers try to disrupt the normal behavior of systems, so algorithms that are resilient to such attacks are very useful. 

The relative number of non-faulty robots versus Byzantine robots is an essential factor in solving this problem. If there are many Byzantine robots compared to the number of non-faulty robots, then the behaviour of the Byzantine robots can be very influential. As shown in previous work \cite{dieudonne2014gathering}, a team that contains $f$ Byzantine robots cannot solve gathering if the number of non-faulty robots is less than $f+1$. The challenge, and the goal of our work, is to provide an efficient gathering algorithm that works when this bound is met, i.e., when the number of non-faulty robots is exactly $f+1$. We provide such an algorithm in a model in which each robot is endowed with sensors that allow them to see all nodes and robots within a fixed distance $H$ of its current location, where $H$ is at least the radius of the network. We also prove an impossibility result which shows that no algorithm can solve gathering in this model if $H$ is any fixed constant (i.e., independent of any graph parameter). It's important to note that this impossibility result does not contradict previous results \cite{bouchard2016byzantine,bouchard2018byzantineICALP,dieudonne2014gathering,Tsuch} that provide gathering algorithms with no visibility, as those algorithms make assumptions about additional information known to the robots (such as bounds on the network size, or on the number of Byzantine robots) or make assumptions about additional features such as authenticated whiteboards at the nodes.

\subsection{Model and Definitions}\label{Model}
We consider a team of $m$ robots that are initially placed at arbitrary nodes of an undirected connected graph $G = (V, E)$. We denote by $n$ the number of nodes in the graph, i.e., $n = |V|$. The nodes have no labels. At each node $v$, the incident edges are labeled with port numbers $0,\ldots,deg(v)-1$ in an arbitrary way, where $deg(v)$ represents the degree of node $v$. The two endpoints of an edge need not be labeled with the same port number. 

For any two nodes $v,w$, the \emph{distance between $v$ and $w$}, denoted by $d(v,w)$, is defined as the length of a shortest path between $v$ and $w$. The \emph{eccentricity of a node $v$}, denoted by $ecc(v)$, is the maximum distance from $v$ to any other node, i.e., $ecc(v) = \max_{w \in V}\{d(v,w)\}$. The \emph{radius} of a graph, denoted by $R$, is defined as the minimum eccentricity taken over all nodes, i.e., $R = \min_{v \in V}\{ecc(v)\}$.

The team of $m$ robots contains $f$ Byzantine robots and $m-f$ non-faulty robots. Each robot $\alpha$ has a distinct identifier (ID) $l_\alpha$, and it knows its own ID. The Byzantine and non-faulty robots look identical, i.e., there is no way to distinguish them other than perhaps noticing strange or unexpected behaviour. All robots have unbounded memory, i.e., they can remember all information that they have previously gained during their algorithm's execution. We describe the differences between the two types of robots below.

\subsubsection{Properties of non-faulty robots.} The non-faulty robots have no initial information about the size or topology of the graph, and they have no information about the number of Byzantine robots. A non-negative integer parameter $H$ defines the \emph{visibility range} of each robot, which we describe in Partial Snapshot below. Each non-faulty robot executes a synchronous deterministic algorithm: in each round, each robot performs one Look-Compute-Move sequence, i.e., it performs the following three operations in the presented order.
	
	\begin{enumerate}
		\item {\textbf{The Look operation:}} A non-faulty robot $\alpha$ located at a node $v$ at the start of round $t$ gains information from two types of view.  
		\begin{itemize}
			\item \textit{Local View:} Robot $\alpha$ can see the degree of node $v$ and the port numbers of its incident edges. It can also see any other robots located at $v$ at the start of round $t$, along with their ID numbers. 
			
			\item \textit{Partial Snapshot View:} Robot $\alpha$ sees the subgraph consisting of all nodes, edges, and port numbers that belong to paths of length at most $H$ that have $v$ as one endpoint. Also, for each node $w$ in this subgraph, robot $\alpha$ sees the list of all IDs of the robots occupying $w$ at the start of round $t$. 
		\end{itemize}
		
		\item {\textbf{The Compute operation:}} Using the information gained during all previous Look operations, a robot $\alpha$ located at a node $v$ deterministically chooses a value from the set $\{null,0,\ldots,deg(v)-1\}$. In particular, it chooses $null$ if it decides that it will stay at its current node $v$, and it chooses a value $p \in \{0,\ldots,deg(v)-1\}$ if it decides to move to the neighbour of node $v$ that is the other endpoint of the incident edge labeled with port number $p$. 
		
		\item {\textbf{The Move operation:}} A robot $\alpha$ located at a node $v$ performs the action that it chose during the Compute operation. In particular, it does nothing if it chose value $null$, and otherwise, it moves towards a neighbour $w$ of $v$ along the incident edge labeled with the chosen port number $p$, and it arrives at $w$ at the start of the next round. It sees the port number that it uses to enter node $w$. There is no restriction of how robots move along an edge, i.e., multiple robots may traverse an edge simultaneously, in either direction.
	\end{enumerate}
	All non-faulty robots wake up at the same time and perform their Look-Compute-Move sequences synchronously in every round.
	
\subsubsection{Properties of the Byzantine robots.} 
	We assume that a centralized adversary controls all of the Byzantine robots. This adversary has complete knowledge of the algorithm being executed by the non-faulty robots, and can see the entire network and the positions of all robots at all times. In each round, the adversary can make each Byzantine robot move to an arbitrary neighbouring node. Further, we assume that the faulty robots are \emph{strongly Byzantine}, which means that the adversary can change the ID of any Byzantine robot at any time (in contrast, a \emph{weakly Byzantine} robot would have a fixed ID during the entire execution). 

\subsubsection{Problem Statement.}
Assume that $m$ robots are initially placed at nodes of a network, where $f$ of the robots are strongly Byzantine. The robots synchronously execute a deterministic distributed algorithm. Eventually, all non-faulty robots must terminate their algorithm in the same round, and at termination, all non-faulty robots must be located at the same node.

\subsection{Related Work}
The study of algorithms for mobile robots is extensive, as evidenced by a recent survey \cite{MobileRobotsSurveyBook}. The Gathering problem has been investigated thoroughly under a wide variety of model assumptions, as summarized in \cite{BhagatSurvey,DefagoSurvey,FloccSurvey} for continuous models and in \cite{CicerSurvey,PelcSurvey} for discrete models. Of particular interest to our current work are discrete models where the robots are located in a network, have some amount of visibility beyond its own position \cite{Barra,Barri,Chal,Fisch,Hsiang}, and where faults may occur \cite{ChalFaults,Oosh,PelcCrash}. 
%
%

Most relevant to our current work are the results about Gathering in networks when some of the robots can be Byzantine \cite{bouchard2016byzantine,bouchard2018byzantineICALP,dieudonne2014gathering,Tsuch}. In \cite{Tsuch}, the authors consider weakly Byzantine agents and add authenticated whiteboards to the model. 
Additionally, each robot has the ability to write ``signed" messages that authenticate the ID of the writer and whether the message was originally written at the current node. The authors provide an algorithm such that all correct robots gather at a single node in $O(f\cdot |E|)$ rounds, where $f$ is an upper bound on the number of Byzantine robots and $|E|$ is the number of edges in the network.

For the model we consider in our work (but with visibility range 0), the Gathering problem was first considered in \cite{dieudonne2014gathering}. The authors explored the gathering problem under four variants of the model: (i) known size of the graph, weakly Byzantine robots, (ii) known size of the graph, strongly Byzantine robots, (iii) unknown size of the graph, weakly Byzantine robots, and (iv) unknown size of the graph, strongly Byzantine robots. In all cases, the authors assume that the upper bound $f$ on the number of Byzantine robots is known to all non-faulty robots. The authors provided a deterministic polynomial-time algorithms for the two models with weakly Byzantine robots. In the model when the size of the graph is known, their algorithm works for any number of non-faulty robots in the team. Recently, the authors of \cite{hir2020gather} provided a significantly faster algorithm under the assumption that the number of non-faulty robots in the team is at least $4f^2+8f+4$. In \cite{dieudonne2014gathering}, assuming that the size of the graph is unknown and $f$ robots are weakly Byzantine, the authors provide an algorithm that works when the number of non-faulty robots in the team is $f+2$. They prove a matching lower bound in this scenario: no algorithm can solve Gathering if the number of non-faulty robots in the team is less than $f+2$. For the model with strongly Byzantine robots and known graph size, the authors provided a randomized algorithm that guarantees that the agents gather in a finite number of rounds, and with high probability terminates in $n^{cf}$ rounds for some constant $c > 0$. They also provided a deterministic algorithm whose running time is exponential in $n$ and the largest ID belonging to a non-faulty agent. In both cases, the number of non-faulty robots in the team is assumed to be at least $2f+1$. The authors also proved a lower bound for this model: no algorithm can solve Gathering if the number of non-faulty robots in the team is less than $f+1$. Finally, for the model with strongly Byzantine robots and unknown graph size, they provided a deterministic algorithm that works when the number of non-faulty robots in the team is at least $4f+2$. The running time is exponential in $n$ and the largest ID belonging to a non-faulty agent. They also proved a lower bound in this model: no algorithm can solve Gathering if the number of non-faulty robots in the team is less than $f+2$. Subsequent work focused on the case of strongly Byzantine robots and attempted to close the gaps between the known upper and lower bounds on the number of non-faulty robots in the team. This was achieved in \cite{bouchard2016byzantine}, as the authors provided algorithms that work when the number of non-faulty robots in the team are $f+1$ and $f+2$ for the cases of known and unknown graph size, respectively. However, the running times of these algorithms were also exponential in $n$ and the largest ID belonging to a non-faulty agent. 

More recently, the authors of \cite{bouchard2018byzantineICALP} considered a version of the above model that does not assume knowledge of the graph size nor the upper bound $f$ on the number of strongly Byzantine agents. Instead, they considered the amount of initial knowledge as a resource to be quantitatively measured as part of an algorithm's analysis. In this model, they designed an algorithm whose running time is polynomial in $n$ and the number of bits in the smallest ID belonging to a non-faulty agent, where $O(\log\log\log n)$ bits of initial information is provided to all robots. The initial information they provide is the value of $\log\log{n}$, which the algorithm uses as a rough estimate of the graph size. Their algorithm works as long as the number of non-faulty robots in the team is at least $5f^2+6f+2$. They also proved a lower bound on the amount of initial knowledge: for any deterministic polynomial Gathering algorithm that works when the number of non-faulty robots in the team is at least $5f^2+6f+2$ and whose running time is polynomial in $n$ and the number of bits in the smallest ID, the amount of initial information provided to all robots must be at least $\Omega(\log \log \log n)$ bits.      

\subsection{Our Results}
We consider a graph-based model in which each robot has no initial information other than its own ID and has some visibility range $H$. We prove that no algorithm can solve Gathering in the presence of Byzantine robots if $H$ is any fixed constant. We also design an algorithm that solves Gathering in any graph with $n$ nodes containing $m$ robots, $f$ of which are strongly Byzantine, and where each non-faulty robot has visibility range $H$ equal to the radius of the graph (or larger). Our algorithm has the following desirable properties: (1) the number of rounds is polynomial with respect to $n$ and $m$, in contrast to several previous algorithms whose running times are exponential in $n$ and the largest robot ID; (2) it works when the number of non-faulty robots in the team is $f+1$ (or larger), which is optimal due to an impossibility result from \cite{dieudonne2014gathering} that also holds in our model, and significantly improves on the best previous polynomial-time algorithm, which requires at least $5f^2+6f+2$ non-faulty robots; (3) it does not assume any initial global knowledge, in contrast to previous algorithms that assume a known bound on the graph size or on the number of Byzantine robots. Such assumptions might be unrealistic in many applications.

\section{The Algorithm}
First, we define some notation that will be used in the algorithm's description and analysis.
For any graph $G$, the \emph{center of graph $G$} is the set of all nodes that have minimum eccentricity, i.e., all nodes $v \in V(G)$ such that $ecc(v) = R$, and the \emph{center graph of a graph $G$}, denoted by $C(G)$, is defined as the subgraph induced by the center nodes. The following terminology will be used to refer to what a robot $\alpha$ can observe in the Look operation of any round $t$ during the execution of an algorithm. The \emph{local view at a node $v$ for round $t$} is denoted by $Lview(v,t)$, and refers to all of the following information: the degree of $v$, the port numbers of its incident edges, and a list of the IDs of all other robots located at node $v$ at the start of round $t$. The \emph{snapshot view at a node $v$ for round $t$} is denoted by $Sview(v,t)$, and refers to all of the following information: the subgraph consisting of all nodes, edges, and port numbers that belong to paths of length at most $H$ that have $v$ as one endpoint, and, for each node $w$ in this subgraph, the list of IDs of all robots occupying $w$ at the start of round $t$. For any graph $G$, an ID $l$ is called a \emph{singleton ID} if the total number of times that $l$ appears as a robot ID at the nodes of $G$ is exactly 1.

\subsection{Algorithm Description}\label{Algorithm}

In what follows, we assume that the visibility range of a non-faulty robot is at least equal to the radius of the graph, i.e., $H \geq R$.
We also assume that the number of non-faulty robots is at least $f+1$.     

The algorithm's progress can be divided into three parts. The first part makes each non-faulty robot move to a node $v_{max}$ such that the robot's snapshot view from $v_{max}$ contains all the nodes of the network $G$. This is the purpose of our {\bf Find-Lookout} subroutine, which we now describe. Each robot $\alpha$ produces a list of potential nodes in its initial snapshot view where it thinks it might be located, and it does this by comparing its local view with the degree and robot list of each node in its initial snapshot. It cannot be sure of its initial position within its snapshot view since Byzantine robots can forge $\alpha$'s ID and position themselves at other nodes that have the same degree as $\alpha$'s current node. From each guessed initial position, $\alpha$ computes a port sequence of a depth-first traversal of its snapshot view and tries following it in the real network. Since one of the guessed initial positions must be correct, at least one of the depth-first traversals will successfully visit all nodes contained in $\alpha$'s initial snapshot view. Since the visibility range is at least the radius of the network, the robot's initial snapshot view must contain a node in the center of the network $G$, so at least one step of at least one of the traversals will visit a node in the center of $G$. When located at such a node, the robot will see all nodes in the network. So, by counting how many nodes it sees at every traversal step, and keeping track of where it saw the maximum, it can correctly remember and eventually go back to a node $v_{max}$ from which it saw all nodes in the network. See Algorithm \ref{alg:FindLookout} for the pseudocode of Find-Lookout. After returning to $v_{max}$ at the end of Find-Lookout, each robot $\alpha$ constructs a set $P_{\alpha}$ consisting of nodes in its snapshot view that match its local view. These can be thought of as `candidate' locations where $\alpha$ thinks it might actually be located within its snapshot view.
\begin{algorithm}
	\small
	\caption{Find-Lookout, executed by $\alpha$ starting at a node $v$ in round 0}
	\label{alg:FindLookout}
	\begin{algorithmic}[1]
	\State Store the initial snapshot $Sview(v,0)$ in its memory as $S_0$
	\State Determine which nodes in $S_0$ might be its starting location, i.e., compute a set $X$ of nodes $w \in S_0$ where the degree of $w$ and the list of robot ID's at $w$ is the same as $v$'s local view in round 0.
	\For{each $w \in X$}
		\State \multiline{%
			Compute a port sequence $\tau$ corresponding to a depth-first traversal of $S_0$ starting at $w$, and attempt to follow this port sequence in the actual network}
		\State \multiline{%
			In every round of the attempted traversal, take note of the number of nodes seen in the snapshot view, and remember the maximum such number $n_{max}$, a node $v_{max}$ where this maximum was witnessed, the number $m_{max}$ of robots seen in the snapshot when located at $v_{max}$, and the sequence of ports $\tau_{max}$ used to reach $v_{max}$ from the starting location}
		\State Return to the starting node by reversing the steps taken during the attempt
	\EndFor
	\State For the largest $n_{max}$ seen in any of the traversal attempts, go to the corresponding node $v_{max}$ using the sequence $\tau_{max}$
	\end{algorithmic}
\end{algorithm}

The second part of the algorithm ensures that, eventually, there is a robot with a singleton ID that is located in the center of the network $G$. This is the purpose of our {\bf March-to-Center} subroutine, which depends highly on the fact that each robot starts this part of the algorithm at a node $v_{max}$ from which it can see every node in the network. If a robot starts March-to-Center knowing where in its snapshot view it is located (i.e., $|P_\alpha|=1$), then the robot moves directly to the center of $G$: it computes the center of its snapshot, and moves to one of the nodes in the center of this snapshot, which is also the center of the entire network $G$. If all robots do this, then the center of the network will contain a singleton ID, since there are more non-faulty robots than Byzantine robots, and all non-faulty robots have distinct ID's. The difficult case is when a robot $\alpha$ is not sure where in its snapshot it is located at the start of March-to-Center (i.e., $|P_\alpha|>1$). This is because the Byzantine robots can forge $\alpha$'s  ID and position themselves at other nodes with the same degree as $\alpha$'s current node. In this case, $\alpha$ will not move during March-to-Center, and simply watch to see if it can spot any inconsistencies between its local view and its possible starting locations in its snapshot. The key observation, which we will prove, is that at least one of the following must happen in each execution of March-to-Center: there is a robot with a singleton ID located at a node in the center of the network, or, at least one robot sees an inconsistency and narrows down its list of possible starting locations. So, after enough repetitions of March-to-Center, we can guarantee that there will be a robot with a singleton ID that is located in the center of the network. The location of the robot with the smallest such singleton ID is chosen as $v_{target}$ by all non-faulty robots, and this is the place where the robots will eventually gather. See Algorithm \ref{alg:MarchToCenter} for the pseudocode of March-to-Center.

\begin{algorithm}
	\small
	\caption{March-to-Center($P_\alpha$), run by $\alpha$ starting at a node $v$ in round $t$}
	\label{alg:MarchToCenter}
	\begin{algorithmic}[1]
		\If{$|P_\alpha|=1$}
			\State \multiline{%
			Use current snapshot $Sview(v,t)$ to compute a shortest path $\pi$ starting at the node $v_0 \in P_\alpha$ and ending at a node $v_{closest}$ in the center graph $C(Sview(v,t))$ that minimizes the distance $d(v_0,v_{closest})$}
			\State \multiline{%
			Move along the port sequence in $\pi$ and then wait $H-|\pi|$ rounds at $v_{closest}$}
		\Else
			\State \multiline{%
			Wait at current node $v$ for $H$ rounds, and observe every node $v_j \in P_\alpha$ in every snapshot view during the waiting period}
			\State \multiline{%
			If, in any round of the waiting period, there is some $v_j$ that does not have any robot with ID $l_\alpha$, then remove $v_j$ from $P_\alpha$ (as we're not currently located at $v_j$)}
		\EndIf
		\State In both cases, at the end of the waiting period, check if there is a singleton ID in the center graph $C(Sview(v,t+H))$. If there is such a singleton ID, set $v_{target}$ as the node that contains a robot with the smallest singleton ID in $C(Sview(v,t+H))$. Otherwise, $v_{target}$ is set to $null$.
	\end{algorithmic}
\end{algorithm}

The third part of the algorithm gets each robot to successfully move to the target node $v_{target}$, which completes the gathering process. This is the purpose of our Merge subroutine. As above, if a robot starts Merge knowing where in its snapshot view it is located (i.e., $|P_\alpha|=1$), then it can simply compute a sequence of port numbers that leads to $v_{target}$ and follow it. The difficult case is when a robot $\alpha$ is not sure where in its snapshot it is located at the start of Merge (i.e., $|P_\alpha|>1$). In this case, $\alpha$ just tries one node from its list of possibilities, computes a sequence of port numbers that leads to $v_{target}$, and tries to follow it. If it notices any inconsistencies along the way or after it arrives, it deletes the guessed starting node from its list $P_\alpha$. After each Merge, each robot reverses the steps it took during the Merge in order to go back to where it started so that it can run Merge again. Each execution of Merge finishes in one of two ways: all robots have gathered, or, at least one robot has eliminated one incorrect guess about its starting position. So, after a carefully chosen number of repetitions, we can guarantee that the last performed Merge gathers all robots at the same node. See Algorithm \ref{alg:Merge} for the pseudocode of Merge.

\begin{algorithm}
	\small
	\caption{Merge($P_\alpha,v_{target},n$), executed by robot $\alpha$}
	\label{alg:Merge}
	\begin{algorithmic}[1]
		\State Using the snapshot view, determine a shortest path $\pi$ starting at the first node $v_0 \in P_\alpha$ and ending at $v_{target}$.
		\State Attempt to move along the port sequence in $\pi$ to reach $v_{target}$.
		\If{there is a round in which the next port to take along path $\pi$ does not exist in the local view, or, the port used to arrive at the current node is different than the port specified in path $\pi$}
			\State \multiline{%
				Delete $v_0$ from $P_\alpha$, then wait $H-t_{\alpha,Move}$ rounds at the current node, where $t_{\alpha,Move}$ is the number of rounds taken to reach the current node}
		\Else\Comment{the port sequence in $\pi$ was followed with no inconsistency}
			\State \multiline{%
				Wait $H-|\pi|$ rounds at the current node $v$. In each of these rounds $t'$, consider the current snapshot $Sview(v,t')$:}
			\State \indent If the number of nodes in this view is less than $n$, then remove $v_0$ from $P_\alpha$
			\State \indent If $l_\alpha$ is not at $v_{target}$ in $Sview(v,t')$, then remove $v_0$ from $P_\alpha$
			\State \indent \multiline{%
			If the current local view does not match the local view of $v_{target}$ in\\ $Sview(v,t')$ (i.e., a different degree, or a different list of robots),\\
			then remove $v_0$ from $P_\alpha$}
		\EndIf
	\end{algorithmic}
\end{algorithm}

The pseudocode for the complete algorithm, called the H-View-Algorithm, is provided as Algorithm \ref{alg:HView}.

\begin{algorithm}
	\small
	\caption{H-View-Algorithm, run by $\alpha$ starting at node $v$ in round 0}
	\label{alg:HView}
	\begin{algorithmic}[1]
		\State Execute {\bf Find-Lookout()}
		\State Wait at $v_{max}$ until round $x = (m_{max}+2)\cdot n^2_{max}$
		\State In round $x$, create a set $P_\alpha$ consisting of the nodes $w \in Sview(v_{max},x)$ where the degree of $w$ and the list of robot ID's at $w$ are the same as $v_{max}$'s local view in round $x$
		\State Initialize $v_{target}$ to $null$, initialize $phase$ to 1
		\Repeat
			\State Execute {\bf March-to-Center}$(P_\alpha)$
			\State $phase \leftarrow phase + 1$
		\Until{$v_{target} \neq null$}
		\Repeat
			\State Execute {\bf Merge}$(P_\alpha, v_{target}, n_{max})$
			\State Perform the traversals of the previous {\bf Merge} in reverse (returning to $v_{max}$)
			\State $phase \leftarrow phase + 1$
		\Until{$phase > \left\lceil\frac{m_{max}}{2}\right\rceil$}
		\State Execute {\bf Merge}$(P_\alpha,v_{target},n_{max})$
		\State terminate()
	\end{algorithmic}
\end{algorithm}

\subsection{Analysis} \label{Analysis}
We consider three main parts of the algorithm. Our first goal is to show that, immediately after robot $\alpha$ executes {\bf Find-Lookout}, it has moved to a node $v_{max}$ such that the snapshot view from $v_{max}$ contains $n_{max} = n$ nodes and $m_{max} = m$ robots.

\begin{lemma}\label{lemma-1}
	By round $(m+2)\cdot n^2$, each non-faulty robot $\alpha$ is located at a node $v_{max}$ such that the snapshot view at $v_{max}$ contains $n$ nodes and $m$ robots. \end{lemma}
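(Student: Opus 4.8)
The plan is to prove Lemma~\ref{lemma-1} in two parts: first, that when robot $\alpha$ finishes {\bf Find-Lookout} it is located at a node $v_{max}$ from which its snapshot records $n_{max}=n$ nodes and $m_{max}=m$ robots; and second, that this happens within $(m+2)\cdot n^2$ rounds. The single geometric fact driving correctness is that, because $H \geq R$, the initial snapshot $S_0 = Sview(v,0)$ must contain a node of the center of $G$: if $c$ is any node with $ecc(c)=R$, then $d(v,c) \leq ecc(c) = R \leq H$, so $c$ lies within $\alpha$'s visibility range from $v$ and hence appears in $S_0$. Moreover, standing at such a center node $c$, a robot sees every node of $G$ (all of $V$ is within distance $R \leq H$ of $c$) and therefore also every one of the $m$ robots, since all robots always occupy nodes of $G$.

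For correctness, I would first observe that the true starting node $v$ is one of the candidate nodes placed in the set $X$: the copy of $v$ inside $S_0$ has, by construction, exactly the degree and robot-ID list that $\alpha$ records in its round-$0$ local view, so it passes the membership test for $X$. The crucial point is that a snapshot is always accurate about graph structure (port-labelled topology), even though Byzantine robots may forge IDs and positions; the adversary controls only robot information, not edges and port numbers. Consequently, the depth-first port sequence $\tau$ that $\alpha$ computes for the candidate $w=v$ describes a genuine walk in $G$ when executed from the real node $v$, and this walk visits every node of $S_0$, in particular the center node $c$. At the step when $\alpha$ stands at $c$ it sees all $n$ nodes and all $m$ robots, so the node count recorded during this attempt reaches $n$. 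Since no snapshot can ever contain more than $n$ nodes, the global maximum $n_{max}$ taken over all attempts equals exactly $n$, it is attained, and the stored node $v_{max}$ is a node from which all $n$ nodes (hence all $m$ robots) are visible; this forces $n_{max}=n$ and $m_{max}=m$. Finally, because $\alpha$ returns to $v$ after every attempt and topology is accurate, re-executing the stored prefix $\tau_{max}$ from $v$ deterministically returns $\alpha$ to the same physical node $v_{max}$, where the claimed visibility holds for every round the robot stays there.

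For the running-time bound I would simply count moves. The candidate set satisfies $|X| \leq |S_0| \leq n$. For each candidate, $\alpha$ follows a depth-first traversal of the at most $n$ nodes of $S_0$ and then reverses it; an Euler-tour depth-first traversal of at most $n$ nodes uses at most $2(n-1)$ edge traversals, so each attempt together with its reversal costs at most $4(n-1)$ rounds. Summing over the at most $n$ candidates and adding the final walk along $\tau_{max}$ (of length at most $2(n-1)$) gives a total of fewer than $4n^2$ rounds, which is at most $(m+2)\cdot n^2$ whenever $m \geq 2$; the only excluded case $m=1$ is a single-robot instance in which gathering is already trivially complete. Substituting the just-proved equalities $n_{max}=n$ and $m_{max}=m$, this bound coincides with the round $x=(m_{max}+2)\cdot n^2_{max}$ at which the main algorithm waits, so $\alpha$ is indeed situated at $v_{max}$ by round $(m+2)\cdot n^2$.

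The step I expect to be the main obstacle is the correctness argument for the true candidate, namely rigorously separating what the adversary can and cannot corrupt: I must argue that the graph portion of every snapshot is reliable, so that the depth-first sequence computed from $S_0$ faithfully tracks a real walk reaching a center node, while simultaneously acknowledging that $\alpha$ can never be certain which element of $X$ is its true position. The resolution is that $\alpha$ does not need this certainty: it runs every candidate's traversal, records a running maximum, and relies only on the existence of one faithful traversal that necessarily attains the absolute ceiling of $n$ visible nodes.
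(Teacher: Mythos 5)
Your correctness core is the same as the paper's: the true start $v$ passes the membership test for $X$; snapshots are topologically reliable (the adversary corrupts only robot IDs/positions, not edges and ports), so the traversal computed for the correct candidate is faithfully executed in $G$ and visits every node of $S_0$, in particular a center node $c$ (present because $d(v,c)\le R\le H$); from $c$ all $n$ nodes and hence all $m$ robots are visible, and maximality of the recorded count forces $n_{max}=n$ and $m_{max}=m$. Where you genuinely diverge is the round count. The paper's key counting observation --- which you skip entirely --- is that every candidate in $X$ must display the ID $l_\alpha$, and since at most $f$ Byzantine robots can forge $l_\alpha$ in round $0$, $|X|\le f+1$; combined with a per-attempt cost of at most $2n^2$ (a DFS bounded by $2|E|<n^2$, plus the reversal) and $f\le\frac{m-1}{2}$ (from $m\ge 2f+1$), this yields $(2f+3)n^2\le(m+2)n^2$. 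You instead bound $|X|\le n$ and compensate with a much tighter per-attempt cost of $4(n-1)$, valid when the DFS walk follows a spanning tree of $S_0$.

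Two caveats on your route. First, your accounting holds only under the spanning-tree reading of ``depth-first traversal'': the paper's own $2|E|$ bound indicates it allows (or intends) an all-edges DFS, under which your $n$-candidate count gives a total of order $n\cdot|E|=O(n^3)$, which does \emph{not} fit within $(m+2)n^2$ when $m$ is small relative to $n$; the $|X|\le f+1$ bound is precisely what makes the paper's analysis robust to that reading. Since the pseudocode is ambiguous, your instantiation is defensible, but the assumption should be stated --- and it buys you an analysis that never invokes the Byzantine count at all. Second, your dismissal of $m=1$ is incorrect as written: the lemma itself (being at $v_{max}$ by round $3n^2$) must still hold in that instance, and ``gathering is trivially complete'' does not establish it, while your generic bound $4n^2$ exceeds $3n^2$. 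The fix is immediate --- with $m=1$ only $\alpha$'s own node hosts a robot, so $|X|=1$ and the total is at most $6(n-1)\le 3n^2$ --- and it is telling that this fix is exactly the $f=0$ special case of the ID-counting argument your proof omits.
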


\begin{proof}
	Consider an arbitrary robot $\alpha$'s execution of the H-View-Algorithm starting at a node $v$. First, $\alpha$ computes the set of nodes $w \in Sview(v,0)$ where the degree of $w$ and the list of robot ID's at $w$ is the same as $v$'s local view in round 0. In particular, this means that each such node $w$ contains $\alpha$'s ID $l_\alpha$ in its list of robots. Since at most $f+1$ robots can have ID $l_\alpha$ in round 0 (i.e., $\alpha$ itself and at most $f$ Byzantine robots), we get that the number of nodes $w$ in $Sview(v,0)$ that look the same as $Lview(v,0)$ is at most $f+1$. Consequently, this means that the number of different depth-first traversals attempted by $\alpha$ is at most $f+1$. Each depth-first traversal takes at most $2|E|$ rounds, which is less than $n^2$. Together with the reversal to return back to its starting node, we get that each attempt takes at most $2n^2$ rounds, so all traversals are complete by round $2(f+1)\cdot n^2$. Since one of the computed traversal sequences starts at $\alpha$'s real initial location, it follows that at least one of the traversal attempts visits all nodes in $Sview(v,0)$. By the definition of the network's center and the fact that $H \geq R$, it follows that $Sview(v,0)$ must contain a node that is in the network's center, and we just showed that $\alpha$ necessarily visited all nodes in $Sview(v,0)$. Since the snapshot view at any node in the center of the network contains all of the network's nodes (since $H \geq R$), it follows that $\alpha$ visits at least one node at which the snapshot view contains all $n$ nodes (and contains all $m$ robots). Robot $\alpha$ will save such a node as $v_{max}$, it will set $n_{max} = n$ and $m_{max} = m$, and it will set $\tau_{max}$ to be a port sequence from $v$ to $v_{max}$. The final traversal of the path $\tau_{max}$ to get from $v$ to $v_{max}$ takes at most another $n^2$ rounds, so, in total, $\alpha$ arrives at $v_{max}$ by round $(2f+3)\cdot n^2$. Since the number of non-faulty robots is at least $f+1$, we get that $m \geq 2f+1$, so $f \leq \frac{m-1}{2}$. Thus, $(2f+3)\cdot n^2 \leq (m+2)\cdot n^2$. \end{proof}

The second part of the algorithm consists of the executions of March-to-Center. Our main goal is to prove that, after at most $f+1$ executions of March-to-Center, every robot sets its $v_{target}$ variable to the same non-null value. To this end, we first prove that each execution of March-to-Center by the non-faulty robots is started at the same time, and, at the end of each execution, every robot is located at a node such that its snapshot contains all of the network's nodes. This allows us to conclude that any particular feature seen by one robot can be seen by all other robots at the same time.

\begin{lemma} \label{lemma-2}
	At the end of each execution of March-to-Center by any non-faulty robot $\alpha$, the robot resides at some node $v$ such that its snapshot view contains all the nodes of $G$. \end{lemma}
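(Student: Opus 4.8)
The plan is to prove, by induction on the number of executions of March-to-Center performed by $\alpha$, the slightly stronger invariant that every execution both \emph{begins} and \emph{ends} at a node whose snapshot view contains all $n$ nodes of $G$ (equivalently, a node of eccentricity at most $H$ in $G$). The base case is immediate: by Lemma~\ref{lemma-1}, $\alpha$ enters its first March-to-Center at $v_{max}$, whose snapshot already contains all $n$ nodes. For the inductive step, I would note that consecutive executions are not separated by any movement in the loop of the H-View-Algorithm, so the starting node of each execution is exactly the ending node of the previous one; hence it suffices to show that if an execution starts at a full-visibility node $v$, it also ends at one.

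I would then split on the branch taken inside March-to-Center. If $|P_\alpha|>1$, the robot simply waits at $v$ for $H$ rounds and never moves, so it ends at $v$, which has full visibility by hypothesis; this case is trivial. The substantive case is $|P_\alpha|=1$. Here I would first invoke the invariant that $\alpha$'s true location always lies in $P_\alpha$ (the set is initialized to contain it and is only ever pruned of candidates provably inconsistent with $\alpha$'s observations, while the true location is never inconsistent). Consequently $|P_\alpha|=1$ forces $v_0$ to be $\alpha$'s actual node $v$, so the port sequence of the snapshot-computed shortest path $\pi$ is faithfully realized in $G$ and $\alpha$ genuinely arrives at the real node $v_{closest}$, where it then waits out the remaining $H-|\pi|$ rounds.

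It remains to show that $v_{closest}$ has full visibility, and this is the heart of the argument; the key is to reason about eccentricities rather than about exact identification of the center. Let $S=Sview(v,t)$ denote the snapshot graph, a spanning subgraph of $G$. Since $S$ contains every node we have $ecc_G(v)\le H$, so each shortest path in $G$ from $v$ has length at most $H$ and hence lies entirely in $S$; distances out of $v$ are therefore preserved, $d_S(v,u)=d_G(v,u)$ for all $u$, giving $ecc_S(v)=ecc_G(v)\le H$. It follows that the radius of $S$ satisfies $R_S=\min_u ecc_S(u)\le ecc_S(v)\le H$. Now $v_{closest}\in C(S)$ means $ecc_S(v_{closest})=R_S\le H$, and because $S$ is a subgraph of $G$ we get $ecc_G(v_{closest})\le ecc_S(v_{closest})\le H$. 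Thus all of $G$ lies within distance $H$ of $v_{closest}$, so its snapshot view contains all $n$ nodes, completing the induction.

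The main obstacle I anticipate is the tempting but flawed attempt to argue that $\alpha$'s snapshot reproduces the true center, i.e.\ $C(S)=C(G)$. This can fail: the snapshot need not contain an edge joining two nodes both at distance exactly $H$ from $v$, so distances between far-apart pairs, and hence eccentricities, may be overestimated in $S$, making $C(S)$ differ from $C(G)$. The eccentricity bound above is precisely what sidesteps this difficulty: Lemma~\ref{lemma-2} asserts only a visibility guarantee, and the chain $ecc_G(v_{closest})\le R_S\le H$ delivers it without requiring $v_{closest}$ to be a genuine center node of $G$.
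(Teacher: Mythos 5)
Your proof is correct, and it is in fact more careful than the paper's own argument, which is shorter and takes exactly the shortcut you flag as dangerous. The paper splits into the same two cases (the robot ends at its starting node, or at $v_{closest}$), handles the first case via Lemma~\ref{lemma-1}, and then in the second case simply asserts that $v_{closest} \in C(Sview(v_{max},t))$ implies $v_{closest}$ is in the center of $G$, concluding via $d(v_{closest},\cdot) \leq R \leq H$. As you observe, that containment is not automatic: the snapshot is only a spanning subgraph of $G$ (edges between two nodes both at distance $H$ from the observer can be missing), so $C(S)$ need not coincide with $C(G)$. One can check that the paper's step is salvageable when $H = R$ exactly (then $R_S \le ecc_S(v) = ecc_G(v) \le R$ forces any $u \in C(S)$ to satisfy $ecc_G(u) \le ecc_S(u) = R_S \le R$, hence $u \in C(G)$), but the lemma is stated for all $H \geq R$, and for $H > R$ the containment can fail; your eccentricity chain $ecc_G(v_{closest}) \leq ecc_S(v_{closest}) = R_S \leq ecc_S(v) = ecc_G(v) \leq H$ proves the visibility guarantee directly without ever needing $v_{closest} \in C(G)$, which is exactly the right fix. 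Your proof also makes explicit two points the paper leaves implicit: the induction across executions (later executions may start at the previous $v_{closest}$ rather than the original $v_{max}$, so Lemma~\ref{lemma-1} alone does not cover them), and the fact that $|P_\alpha| = 1$ together with the never-remove-the-true-node invariant (which the paper only establishes inside the proof of Proposition~\ref{pro-1}) guarantees the path $\pi$ is faithfully realized in the real graph. The trade-off: the paper's proof is three sentences and conveys the intuition; yours is longer but is the version that actually withstands scrutiny for all $H \geq R$.
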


\begin{proof}
	We consider the two cases in the description of March-to-Center. We note that, at the end of each execution of March-to-Center by a non-faulty robot $\alpha$, either $\alpha$ is at the node $v_{max}$ where it started the execution, or, it is at a node $v_{closest}$ which is defined to be in $C(Sview(v_{max},t))$, i.e., the center graph of $\alpha$'s snapshot view from node $v_{max}$. In the first case, Lemma \ref{lemma-1} tells us that the snapshot view from node $v_{max}$ contains all the nodes of $G$. In the second case, we observe that $v_{closest}$ is in the center of $G$ since it is in the center graph of $\alpha$'s snapshot view from node $v_{max}$ (which contains all nodes of $G$). But by the definition of center, the distance from $v_{closest}$ to any node in $G$ is at most $R \leq H$, so all nodes of $G$ are in the snapshot view from $v_{closest}$ as well. \end{proof} 

\begin{lemma} \label{lemma-3}
	Suppose that every non-faulty robot starts an execution of March-to-Center in the same round $t' > 0$. In round $t'+H$, every non-faulty robot has the same snapshot view.
\end{lemma}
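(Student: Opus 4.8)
The plan is to combine three ingredients: (i) a timing argument showing that all non-faulty robots perform their final \emph{Look} of March-to-Center in the very same round $t'+H$; (ii) Lemma \ref{lemma-2}, which guarantees that each robot is then located at a node from which it sees the whole network; and (iii) the observation that a snapshot view which contains every node is nothing more than the full labeled graph $G$ together with the single global robot configuration of that round, and so cannot depend on which node the observer occupies.

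First I would establish the synchronization. Inspecting Algorithm \ref{alg:MarchToCenter}, each of the two branches consumes exactly $H$ rounds before the concluding check: in the branch with $|P_\alpha|=1$ the robot spends $|\pi|$ rounds moving along $\pi$ and then waits the remaining $H-|\pi|$ rounds at $v_{closest}$, while in the branch with $|P_\alpha|>1$ it waits $H$ rounds in place. Since by hypothesis every non-faulty robot begins this execution in the common round $t'$, every non-faulty robot reaches the final ``check for a singleton ID'' step, and in particular takes its final snapshot, at the start of round $t'+H$.

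Next I would locate the robots at that instant. By Lemma \ref{lemma-2}, at the end of the execution each non-faulty robot $\alpha$ resides at some node $v^{(\alpha)}$ whose snapshot view contains all the nodes of $G$, equivalently $ecc(v^{(\alpha)}) \le H$. The crux of the argument is to upgrade ``contains all the nodes'' to ``is the whole labeled graph'': I would argue that because every node of $G$ lies within distance $H$ of $v^{(\alpha)}$, the subgraph recorded in $Sview(v^{(\alpha)},t'+H)$ is exactly $G$ together with all of its edges and port numbers. Granting this, the subgraph component of every robot's snapshot is one and the same object, independent of the particular node $v^{(\alpha)}$ it happens to occupy (which may be a lookout node or a center node, and need not agree across robots).

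Finally I would compare the robot-ID data attached to this common subgraph. These lists are all read in the single round $t'+H$, and at a fixed round there is one definite global placement of all robots, faulty and non-faulty alike. Hence every node $w$ carries a single well-defined ID list at round $t'+H$, and since each robot's snapshot sees all of $G$, each robot records exactly those lists. Combining the two previous paragraphs, the pair (labeled subgraph, per-node ID lists) stored by every non-faulty robot is identical, which is the claim. I expect the main obstacle to be precisely the step that turns full node-visibility into full structural visibility: verifying that seeing every node at distance at most $H$ forces the recorded subgraph to include every edge and port of $G$, so that robots standing at genuinely different center or lookout nodes nonetheless store identical snapshot views. The timing step and the uniqueness of the round-$(t'+H)$ configuration are then routine.
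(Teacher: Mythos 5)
Your proposal follows essentially the same route as the paper's proof, which consists of exactly your two main steps compressed into three sentences: both branches of March-to-Center take exactly $H$ rounds (the $|P_\alpha|=1$ branch spends $|\pi|$ rounds moving plus $H-|\pi|$ waiting, the other branch waits $H$ rounds), and then, citing Lemma \ref{lemma-2}, the paper simply asserts that in round $t'+H$ ``each robot's snapshot view is the entire graph.'' Your timing argument and your closing observation about the unique global configuration in round $t'+H$ are exactly this argument, spelled out.

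The step you single out as the main obstacle --- upgrading ``the snapshot contains all nodes of $G$'' (which is all Lemma \ref{lemma-2} gives) to ``the snapshot is all of $G$, with every edge and port'' --- is precisely the step the paper passes over silently, and your caution is warranted: as stated, the upgrade is not automatic under the model's definition of the snapshot view. An edge whose two endpoints both lie at distance exactly $H$ from the observer belongs to no path of length at most $H$ having the observer as an endpoint, so it is absent from the snapshot even though both its endpoints are visible. For instance, in an odd cycle with $H=R$, every node has eccentricity exactly $H$, and the snapshot from each node omits the one edge joining the two antipodal-most nodes --- a different edge for different observers, so the recorded subgraphs genuinely differ. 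The upgrade does hold whenever the observing node has eccentricity at most $H-1$, since then every edge has an endpoint within distance $H-1$ and hence lies on a path of length at most $H$ from the observer; but robots finishing March-to-Center may sit at nodes of eccentricity exactly $H$ when $H=R$. So a fully rigorous proof needs either a strengthened hypothesis ($H\geq R+1$, or eccentricity strictly below $H$) or a reading of the visibility model in which all edges between visible nodes are reported. In short: your approach coincides with the paper's, and the gap you flagged is real --- but it is a gap in the paper's own proof as well, not a defect peculiar to your attempt.
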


\begin{proof}
	We see from the description of March-to-Center that there can be two cases in each execution: moving along the path $\pi$ for $|\pi|$ rounds followed by a waiting period of length $H-|\pi|$, or, a waiting period of length $H$. In both cases, the execution takes exactly $H$ rounds. Now, by Lemma $\ref{lemma-2}$, we see that at the end of the execution, i.e., in round $t'+H$, each robot's snapshot view is the entire graph.  \end{proof}

\begin{lemma} \label{lemma-4}
	For any positive integers $i$ and $t'$, suppose that every non-faulty robot starts its $i^{th}$ execution of March-to-Center in round $t'$. Then, at the start of round $t'+H$,  exactly one of the following is true: (i) every non-faulty robot sets $v_{target}$ equal to a non-null value, or, (ii) every non-faulty robot has $v_{target}$ equal to null, and they all start their $(i+1)^{th}$ execution of March-to-Center.
\end{lemma}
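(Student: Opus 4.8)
The plan is to leverage Lemma~\ref{lemma-3} to show that the assignment of $v_{target}$ is a deterministic function of a single snapshot view that all non-faulty robots share, so that every non-faulty robot is forced to make the same decision. The hypothesis of the present lemma (all non-faulty robots start their $i^{th}$ execution of March-to-Center in the same round $t'$) is exactly the hypothesis of Lemma~\ref{lemma-3}, so that lemma applies and tells us that in round $t'+H$ every non-faulty robot has the same snapshot view; call this common view $S$, and recall from Lemma~\ref{lemma-2} that $S$ contains all of $G$ together with the list of robot IDs at each node.

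Next I would trace through the final line of March-to-Center on this common view. Each robot computes the center graph $C(Sview(v,t'+H)) = C(S)$; since $S$ is identical for all non-faulty robots, they all obtain the same center graph and the same list of robot IDs at each of its nodes. Consequently, they all identify the same set of singleton IDs present in $C(S)$, and hence agree on whether that set is empty. If it is non-empty, each non-faulty robot selects the node holding the smallest such singleton ID --- the same node for all of them, since the selection is a deterministic computation on identical input --- and sets $v_{target}$ to that node, which is case (i). If it is empty, each non-faulty robot sets $v_{target}$ to null, which is case (ii). Because ``$C(S)$ contains a singleton ID'' is a single binary condition evaluated on the common view $S$, exactly one of the two cases occurs.

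It then remains to justify the synchronization claim in case (ii). As observed in the proof of Lemma~\ref{lemma-3}, each execution of March-to-Center occupies exactly $H$ rounds regardless of which branch is taken, so all non-faulty robots complete the $i^{th}$ execution simultaneously and perform the $v_{target}$ check in round $t'+H$. Inspecting the first Repeat loop of the H-View-Algorithm (Algorithm~\ref{alg:HView}), a robot re-enters March-to-Center precisely when its $v_{target}$ is still null; since in case (ii) every non-faulty robot has $v_{target}$ equal to null and all finish in round $t'+H$, they all begin their $(i+1)^{th}$ execution in round $t'+H$, as required.

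The argument is mostly a bookkeeping consequence of Lemma~\ref{lemma-3}, so I do not expect a serious obstacle; the one point that must be handled carefully is the observation that $v_{target}$ is determined solely by the snapshot view, and not by a robot's private history nor by its uncertain position $P_\alpha$ within the view, so that identical snapshots force identical choices of $v_{target}$. The tie-breaking rule of taking the \emph{smallest} singleton ID is exactly what makes this a well-defined common choice, and the fact that the snapshot records the IDs at every visible node is what lets all non-faulty robots agree on which IDs are singletons.
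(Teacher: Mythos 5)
Your proposal is correct and follows essentially the same route as the paper: invoke Lemma~\ref{lemma-3} to get a common snapshot view $S$ in round $t'+H$, then case-split on whether $C(S)$ contains a singleton ID, with the H-View-Algorithm's loop structure handling the restart in case (ii). Your additional remarks --- that each execution takes exactly $H$ rounds regardless of branch, and that $v_{target}$ depends only on the shared snapshot rather than on private state such as $P_\alpha$ --- are points the paper leaves implicit (the first appears in the proof of Lemma~\ref{lemma-3}), so your write-up is a slightly more explicit version of the same argument.
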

\begin{proof}
	By Lemma \ref{lemma-3}, in round $t'+H$, every robot gets the same snapshot view $S$. There are two cases to consider. In the first case, suppose that there is a singleton ID in the center graph of $S$. Then, according to the description of March-to-Center, every non-faulty robot sets its variable $v_{target}$ to the node that contains a robot with the smallest singleton ID, which implies that every non-faulty robot has $v_{target}$ equal to a non-null value. In the second case, suppose that there is no singleton ID in the center graph of $S$. Then, according to the description of March-to-Center, $v_{target}$ at each non-faulty robot remains null. According to the description of the H-View-Algorithm, this means that all non-faulty robots will execute March-to-Center again.  \end{proof}

We now proceed to show that each execution of March-to-Center by the non-faulty robots is started at the same time. This is useful because it means that the robots make decisions using the same snapshot view, which minimizes the influence of the Byzantine robots: if a Byzantine robot imitates a non-faulty robot's ID $l$ in a fixed round $t$, then it cannot imitate any other ID's in the same round.

\begin{lemma}\label{lemma-5}
	For any positive integer $k$, suppose that all non-faulty robots start their $k^{th}$ execution of March-to-Center and have $v_{target}=null$. For every positive integer $i \leq k$, every non-faulty robot starts executing its $i^{th}$ execution of March-to-Center in round $(m+2)n^2 + (i-1)H$.
\end{lemma}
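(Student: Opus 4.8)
The plan is to prove the statement by induction on $i$, with the two driving facts being that every non-faulty robot begins its first execution of March-to-Center at exactly round $(m+2)n^2$, and that Lemma \ref{lemma-4} propagates synchronous starts from one execution to the next. The induction therefore runs over $i$ from $1$ up to $k$, using Lemma \ref{lemma-4} as the engine that advances the index while preserving synchronization.

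For the base case $i=1$, I would invoke Lemma \ref{lemma-1}: every non-faulty robot reaches a node $v_{max}$ with $n_{max}=n$ and $m_{max}=m$ by round $(m+2)n^2$. Since these quantities are identical for all non-faulty robots, each one computes the same waiting target $x=(m_{max}+2)n_{max}^2=(m+2)n^2$ in the H-View-Algorithm and waits at $v_{max}$ until that round before entering the repeat loop. Hence all non-faulty robots begin their first execution of March-to-Center in round $(m+2)n^2=(m+2)n^2+(1-1)H$, as required.

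For the inductive step, suppose that for some $i<k$ every non-faulty robot starts its $i^{th}$ execution in the common round $t'=(m+2)n^2+(i-1)H$. Because the hypothesis states that all non-faulty robots reach a $k^{th}$ execution, the repeat-until loop of the H-View-Algorithm cannot have terminated after any of the first $k-1$ executions; in particular, after the $i^{th}$ execution the value $v_{target}$ was still null. I would then apply Lemma \ref{lemma-4} to this common start round $t'$ and index $i$: since outcome (i) (every robot setting $v_{target}$ to a non-null value) is ruled out, outcome (ii) must hold, so every non-faulty robot starts its $(i+1)^{th}$ execution in round $t'+H=(m+2)n^2+iH=(m+2)n^2+((i+1)-1)H$. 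This is precisely the claimed formula for index $i+1$, completing the induction.

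The argument is essentially bookkeeping layered on top of Lemma \ref{lemma-4}, so I do not anticipate a serious technical obstacle. The one point requiring care is the logical link between the \emph{global} hypothesis—that all robots reach a $k^{th}$ execution—and the \emph{per-step} fact that $v_{target}$ remains null after each of the first $k-1$ executions; this is exactly what rules out outcome (i) and forces outcome (ii) of Lemma \ref{lemma-4} at every step, thereby keeping the synchronized induction alive. It is also worth stressing that the inductive hypothesis supplies the \emph{common} start round that Lemma \ref{lemma-4} requires as input, so synchronization is genuinely preserved from one execution to the next rather than merely assumed.
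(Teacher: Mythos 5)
Your proof is correct and follows essentially the same route as the paper's: induction on $i$ with the base case read off from the H-View-Algorithm's wait until round $(m+2)n^2$, and the inductive step driven by Lemma~\ref{lemma-4} together with the observation that reaching a $k^{th}$ execution forces $v_{target}$ to remain null after each earlier execution. Your base case is in fact slightly more careful than the paper's (you explicitly invoke Lemma~\ref{lemma-1} to justify that all robots compute the same waiting round), and relying on Lemma~\ref{lemma-4} alone for the $H$-round timing is fine since its conclusion (ii) already states that the next execution begins in round $t'+H$.
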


\begin{proof}
	We prove the statement by induction on $i$.
	
	\textbf{Base case:} From the description of the H-View-Algorithm, each non-faulty robot executes March-to-Center for the first time starting in round $(m_{max}+2)\cdot n^2_{max} = (m+2)\cdot n^2$. Thus, the statement is true for $i=1$.
	
	\textbf{Inductive step:} Assume that, for some $j \in \{1,\ldots,k-1\}$, the statement is true for $i=j$. In particular, assume that every robot started its $j^{th}$ execution of March-to-Center in round $(m+2)n^2 + (j-1)H$. By the description of March-to-Center, there can be two cases in their $j^{th}$ execution: moving along the path $\pi$ for $|\pi|$ rounds followed by a waiting period of length $H-|\pi|$, or, a waiting period of length $H$. In both cases, the execution takes exactly $H$ rounds. By Lemma \ref{lemma-4} and the fact that no robot has set its $v_{target}$ variable to a non-null value before the $k^{th}$ execution, we get that in round $(m+2)n^2 + (j-1)H+H=(m+2)n^2 + j \cdot H$, every robot starts its $(j+1)^{th}$ execution of March-to-Center.  \end{proof}

\begin{lemma} \label{lemma-6}
	Let $k > 0$ be the smallest integer such that at least one non-faulty robot sets its $v_{target}$ to a non-null value during its $k^{th}$ execution of March-to-Center, and suppose that this execution of March-to-Center starts in round $t'$. Then, every non-faulty robot sets $v_{target}$ to the same value at the start of round $t'+H$.
\end{lemma}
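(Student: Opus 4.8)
The plan is to combine the synchronization guarantees of Lemmas 3--5 with the observation that March-to-Center computes $v_{target}$ as a deterministic function of the snapshot taken in round $t'+H$. First I would exploit the minimality of $k$. Since $k$ is the smallest index at which any non-faulty robot sets $v_{target}$ to a non-null value, no non-faulty robot assigns a non-null value during any of the executions $1,\ldots,k-1$; in particular, every non-faulty robot still has $v_{target}=null$ when it begins its $k$th execution. To see that they all begin this execution in the same round, I would run a short induction driven by Lemma 4: the base case (round $(m+2)n^2$) comes from the H-View-Algorithm description together with Lemma 1, and at each step $i<k$ the failure of case~(i) forces case~(ii), so every non-faulty robot proceeds to its $(i+1)$th execution in the same round. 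This verifies the hypothesis of Lemma 5, which then pins the start of the $k$th execution to round $t' = (m+2)n^2+(k-1)H$.

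With synchronization in hand, I would apply Lemma 3 to the $k$th execution: because every non-faulty robot starts it in the same round $t'$, in round $t'+H$ they all share the identical snapshot view $S$, and by Lemma 2 this snapshot contains all of $G$. Next I would apply the dichotomy of Lemma 4 to the $k$th execution. Since the hypothesis of Lemma 6 guarantees that at least one non-faulty robot sets $v_{target}$ to a non-null value during this execution, case~(ii) is ruled out, so case~(i) must hold and every non-faulty robot sets $v_{target}$ to a non-null value.

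It then remains to argue agreement on the actual value. Here I would appeal directly to the description of March-to-Center: each robot computes $v_{target}$ as the node holding the robot with the smallest singleton ID in the center graph $C(Sview(v,t'+H))$. Because every non-faulty robot has the identical snapshot $S$ in round $t'+H$, they all compute the same center graph $C(S)$, see the same list of robot IDs at each node, identify the same set of singleton IDs, select the same smallest one, and therefore set $v_{target}$ to the same node of $G$.

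The step I expect to require the most care is the first one: deducing from the minimality of $k$ that every non-faulty robot both reaches its $k$th execution and still has $v_{target}=null$ when it does so, since this is exactly what licenses the invocation of Lemmas 3 and 5. Once this synchronization is established, the remainder is the routine observation that $v_{target}$ is a deterministic function of the (now common) snapshot view, so identical inputs force identical outputs.
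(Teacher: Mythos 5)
Your proposal is correct and follows essentially the same route as the paper's own proof: Lemma \ref{lemma-5} pins down a common start round $t'$ for the $k^{th}$ execution, Lemma \ref{lemma-4} rules out the all-null case by the hypothesis on $k$, and Lemma \ref{lemma-3} gives a common snapshot $S$ so that the deterministic rule (smallest singleton ID in $C(S)$) forces agreement. Your extra inductive step verifying the hypothesis of Lemma \ref{lemma-5} from the minimality of $k$ is a sound (and slightly more careful) elaboration of what the paper leaves implicit.
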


\begin{proof}
	By Lemma \ref{lemma-5}, for every positive integer $i\leq k$, every robot starts its $i^{th}$ execution of March-to-Center in the same round, so all robots start the $k^{th}$ execution of March-to-Center in round $t'$. Lemma \ref{lemma-4} implies that, at the start of round $t'+H$, either every robot sets a non-null value of $v_{target}$, or, variable $v_{target}$ is null for every robot. The second case does not occur since we know that at least one non-faulty robot sets its $v_{target}$ to a non-null value during its $k^{th}$ execution of March-to-Center. Therefore, the first case occurs: all robots set their $v_{target}$ to a non-null value at the start of round $t'+H$. Moreover, by Lemma \ref{lemma-3}, every robot has the same snapshot view $S$ in round $t'+H$. Hence, by the description of March-to-Center, every robot sets its $v_{target}$ to the same node: the node that contains a robot with smallest singleton ID in the center graph of $S$.  \end{proof}

\begin{corollary}\label{col-1}
	If there exists a positive integer $k$ such that at least one non-faulty robot sets its $v_{target}$ to a non-null value during its $k^{th}$ execution of March-to-Center, then all non-faulty robots set $v_{target}$ to the same non-null value at the start of round $(m+2)n^2 + kH$.
\end{corollary}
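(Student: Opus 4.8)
The plan is to recognize that this corollary is essentially a repackaging of Lemmas \ref{lemma-5} and \ref{lemma-6}, with the only genuine contribution being to substitute the explicit starting round of the decisive execution. The statement should be read with $k$ taken to be the \emph{smallest} positive integer for which some non-faulty robot sets $v_{target}$ to a non-null value during its $k$-th execution of March-to-Center; such a smallest $k$ exists by the well-ordering principle, since the hypothesis guarantees that the set of qualifying integers is nonempty. I would open the proof by fixing this $k$ explicitly, because the round number $(m+2)n^2 + kH$ in the conclusion only matches the synchronization guarantee of Lemma \ref{lemma-5} for this particular choice.

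Next I would verify the hypothesis of Lemma \ref{lemma-5}, namely that all non-faulty robots reach a $k$-th execution and begin it with $v_{target}=null$. This follows from the minimality of $k$: no non-faulty robot sets $v_{target}$ to a non-null value during any of the executions numbered $1,\ldots,k-1$, so an inductive application of Lemma \ref{lemma-4} shows that after each of these executions case (ii) holds, meaning every non-faulty robot keeps $v_{target}$ equal to null and proceeds to its next execution. Since $v_{target}$ is only ever updated at the end of an execution, every robot still has $v_{target}=null$ at the start of its $k$-th execution, which is exactly what Lemma \ref{lemma-5} requires.

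With the hypothesis in hand, I would invoke Lemma \ref{lemma-5} to conclude that every non-faulty robot starts its $k$-th execution of March-to-Center in round $t' = (m+2)n^2 + (k-1)H$. Then Lemma \ref{lemma-6}, which is stated precisely for this smallest $k$, gives that every non-faulty robot sets $v_{target}$ to the same value at the start of round $t'+H$; and since at least one robot sets a non-null value by the choice of $k$, this common value is non-null. Substituting $t'+H = (m+2)n^2 + (k-1)H + H = (m+2)n^2 + kH$ yields the claimed round and finishes the argument.

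I expect no serious obstacle here, as the substantive work was already carried out in Lemmas \ref{lemma-4}, \ref{lemma-5}, and \ref{lemma-6}. The only point requiring care is the interpretation of the hypothesis: one must make explicit that $k$ denotes the smallest qualifying index (equivalently, that no robot sets $v_{target}$ to a non-null value in any earlier execution), since this is simultaneously what validates the hypothesis of Lemma \ref{lemma-5} and what makes the concluding round number correct.
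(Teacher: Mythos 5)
Your proposal is correct and matches the paper's intent exactly: the paper states Corollary~\ref{col-1} without a separate proof, treating it as the immediate combination of Lemma~\ref{lemma-5} (which pins the start of the $k$-th execution to round $(m+2)n^2+(k-1)H$) and Lemma~\ref{lemma-6} (which gives agreement on a common non-null $v_{target}$ at round $t'+H$). Your extra care in fixing $k$ as the smallest qualifying index and verifying Lemma~\ref{lemma-5}'s hypothesis via Lemma~\ref{lemma-4} is sound and simply makes explicit what the paper leaves implicit (indeed, by the structure of the repeat loop in the H-View-Algorithm, this $k$ is automatically unique).
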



We now set out to show that all robots set their $v_{target}$ variable to a non-null value within $f+1$ executions of March-to-Center. The idea behind the proof is to show that, in each execution of March-to-Center that ends with $v_{target} = null$, at least one non-faulty robot makes progress towards determining its correct location within its snapshot view. Once there are enough robots that have determined their correct location (more than the number of Byzantine robots), we are guaranteed to have at least one singleton ID appear in the center of the graph, and all robots will set their $v_{target}$ as the location of the smallest such ID. 

To formalize the argument, we introduce a function $\Phi$ that measures how much progress has been made by all robots towards determining their correct location within their snapshot view. In what follows, for each $t \geq (m+2)\cdot n^2$, we denote by $P_{\alpha,t}$ the value of variable $P_\alpha$ at robot $\alpha$ in round $t$. From the description of the H-View-Algorithm, recall that $P_\alpha$ is set by each robot $\alpha$ for the first time in round $(m+2)\cdot n^2$, and the value assigned in this round is the set of nodes in $\alpha$'s snapshot view that match its local view, i.e., the nodes that have the same degree and the same list of robot ID's as $\alpha$'s current location. In subsequent rounds, the only changes to $P_\alpha$ involve the removal of nodes, so $P_{\alpha,t+1} \subseteq P_{\alpha,t}$ for all $t > (m+2)\cdot n^2$. For any fixed round $t \geq (m+2)\cdot n^2$, we denote by $\Phi_t$ the sum $\sum_\alpha |P_{\alpha,t}|$, which is taken over all non-faulty robots $\alpha$. We now prove some useful bounds on $\Phi_t$ and how its value changes in each execution of March-To-Center.


\begin{proposition} \label{pro-1}
	In any round $t \geq (m+2)\cdot n^2$, we have $m-f \leq \Phi_t \leq m$.
\end{proposition}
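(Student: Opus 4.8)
The plan is to prove the two inequalities separately, using different ideas for each: a pointwise per-robot bound for the lower bound, and a global injective charging argument combined with the monotonicity of $\Phi_t$ for the upper bound.

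For the lower bound $\Phi_t \geq m-f$, I would show that $|P_{\alpha,t}| \geq 1$ for every non-faulty robot $\alpha$ and every $t \geq (m+2)n^2$; since there are exactly $m-f$ non-faulty robots, summing this gives the claim. The key point is that the node corresponding to $\alpha$'s actual location is never deleted from $P_\alpha$. When $P_\alpha$ is first created in round $(m+2)n^2$ it consists of the snapshot nodes matching $\alpha$'s own local view, which trivially includes $\alpha$'s true position, so $|P_{\alpha,(m+2)n^2}| \geq 1$. I would then inspect every removal rule in March-to-Center and in Merge and verify that each one only discards a node that $\alpha$ has provably distinguished from its true position: in March-to-Center a node $v_j$ is removed only if it lacks a robot with ID $l_\alpha$ during a round in which $\alpha$ is itself waiting there, so the true node (always occupied by $\alpha$) survives; and in Merge a guessed node is discarded only after an inconsistency along the path or a failed check at $v_{target}$, neither of which can occur when the guess is $\alpha$'s genuine starting node. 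Hence the correct candidate is never eliminated, $P_\alpha$ stays nonempty, and $|P_{\alpha,t}| \geq 1$ throughout.

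For the upper bound $\Phi_t \leq m$, the excerpt already records that $P_{\alpha,t+1} \subseteq P_{\alpha,t}$ for $t > (m+2)n^2$, so $\Phi_t$ is non-increasing and it suffices to bound $\Phi_{(m+2)n^2}$ at the initialization round. By Lemma~\ref{lemma-1}, in that round every non-faulty robot sits at a node whose snapshot contains all of $G$, so all robots observe one common global configuration; in particular each $w \in P_{\alpha,(m+2)n^2}$ genuinely contains a robot whose displayed ID equals $l_\alpha$, because $w$ matches $\alpha$'s local view and that view lists $l_\alpha$. I would then define a charging map sending each pair $(\alpha,w)$, with $\alpha$ non-faulty and $w \in P_{\alpha,(m+2)n^2}$, to some robot located at $w$ in that round with ID $l_\alpha$, and argue it is injective: two pairs with distinct first coordinates $\alpha \neq \alpha'$ map to robots carrying distinct IDs $l_\alpha \neq l_{\alpha'}$ (non-faulty IDs are distinct), hence distinct robots, while two pairs $(\alpha,w),(\alpha,w')$ with $w \neq w'$ map to robots at distinct nodes, hence distinct robots. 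Injectivity yields $\Phi_{(m+2)n^2} = \sum_\alpha |P_{\alpha,(m+2)n^2}| \leq m$, the total number of robots, and monotonicity propagates the bound to all $t \geq (m+2)n^2$.

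The main obstacle I anticipate is the upper bound, specifically setting up the injective charging so that the Byzantine robots' freedom to forge IDs is correctly accounted for. A naive per-robot estimate yields only $|P_{\alpha,(m+2)n^2}| \leq f+1$ (the bound used in the proof of Lemma~\ref{lemma-1}), which summed over the $m-f$ non-faulty robots is far too weak; the charging argument is precisely what converts the trivial fact that each Byzantine robot displays a single ID in a given round into the tight global bound $m$, and the care lies in checking injectivity across both coordinates simultaneously. A secondary point requiring attention is confirming that \emph{every} removal rule, across both March-to-Center and Merge, spares $\alpha$'s correct candidate, so that the pointwise inequality $|P_{\alpha,t}| \geq 1$ underlying the lower bound holds for all $t \geq (m+2)n^2$, including rounds occurring during the Merge phase.
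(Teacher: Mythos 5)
Your proposal is correct and takes essentially the same approach as the paper: the lower bound by arguing that a robot's true location is never removed from $P_\alpha$ (so $|P_{\alpha,t}|\geq 1$ for each of the $m-f$ non-faulty robots), and the upper bound by injectively charging each candidate node to a distinct robot displaying ID $l_\alpha$, using distinctness of non-faulty IDs and the fact that every robot has exactly one ID per round. The only (cosmetic, and arguably beneficial) difference is that you perform the charging once, at the initialization round $(m+2)\cdot n^2$, and propagate the bound via the monotonicity $P_{\alpha,t+1}\subseteq P_{\alpha,t}$, whereas the paper asserts the charging invariant pointwise in every round $t$ — your variant avoids having to claim that every node still in $P_{\alpha,t}$ hosts ID $l_\alpha$ in later rounds.
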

\begin{proof}
	First, we show that $\Phi_t \leq m$. Since each $P_{\alpha,t}$ only contains nodes where the ID $l_\alpha$ appears in round $t$, it follows that $|P_{\alpha,t}|$ is bounded above by the number of robots whose ID in round $t$ is equal $l_\alpha$. As each robot has exactly one ID in round $t$ (including the Byzantine robots), it follows that $\Phi_t = \sum_\alpha |P_{\alpha,t}| \leq m$.
	Next, to show that $\Phi_t \geq m-f$, we observe that there are $m-f$ non-faulty robots, and each non-faulty robot $\alpha$ has $|P_{\alpha,t}| \geq 1$ in every round $t \geq (m+2)\cdot n^2$. This is because a non-faulty robot $\alpha$ only removes a node $v$ from $P_\alpha$ if it performs March-to-Center or Merge under the assumption that it starts the execution from node $v$ in its snapshot view, but notices an inconsistency between this assumption and its observed experience. Since $\alpha$'s actual $v_{max}$ node from which it starts March-to-Center or Merge would not result in any inconsistency, this node would never be removed from $P_\alpha$, which implies that $|P_{\alpha}| \geq 1$ after the first round in which $P_\alpha$ is given a value.
\end{proof}

\begin{lemma} \label{lemma-7}
	Consider any execution of March-to-Center by the non-faulty nodes, and suppose that the execution starts in round $t'$. Then, exactly one of the following occurs: (i) all non-faulty robots set their $v_{target}$ variable to a non-null value at the start of round $t' + H$, or, (ii) $\Phi_{t'+H} \leq \Phi_{t'} - 1$.
	
\end{lemma}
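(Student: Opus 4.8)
The statement says: in any execution of March-to-Center starting in round $t'$, either all non-faulty robots set $v_{target}$ to a non-null value (case (i)), or the potential function strictly decreases by at least 1 (case (ii)). The natural approach is to assume case (i) fails and derive case (ii). By Lemma 4, exactly one of two things happens at round $t'+H$: every non-faulty robot sets a non-null $v_{target}$, or every non-faulty robot keeps $v_{target}=null$. So if case (i) does not hold, then no non-faulty robot sets a non-null $v_{target}$, which by Lemma 3 means the common snapshot view $S$ seen by all robots in round $t'+H$ has no singleton ID in its center graph.

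The plan is to leverage this absence of a singleton ID in the center to force at least one robot to shrink its $P_\alpha$. First I would partition the non-faulty robots by their behavior during this execution: those with $|P_{\alpha,t'}|=1$ (the "confident" robots, who move directly to a center node) versus those with $|P_{\alpha,t'}|>1$ (the "unsure" robots, who wait and watch). The key tension is this: every confident robot moves to a center node carrying its own distinct ID $l_\alpha$. If every non-faulty robot were confident, then since there are at least $f+1$ non-faulty robots with distinct IDs all landing in the center, and at most $f$ Byzantine robots can forge IDs, a simple counting argument guarantees some non-faulty robot's ID appears exactly once in the center — a singleton ID — contradicting its absence. Hence, whenever the center has no singleton ID, there must exist at least one unsure robot, i.e., some $\beta$ with $|P_{\beta,t'}|>1$.

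Given such an unsure robot $\beta$, I would argue that $\beta$ necessarily removes at least one node from $P_\beta$ during this execution, which yields $\Phi_{t'+H} \le \Phi_{t'} - 1$ (using that $P$ only ever shrinks, per the monotonicity noted before Proposition 1). The mechanism: $\beta$ waits at its node $v_{max}$ for $H$ rounds and watches every candidate node $v_j \in P_\beta$; it removes $v_j$ if in some round $v_j$ does not host a robot with ID $l_\beta$. Since $|P_\beta|>1$, there is at least one candidate $v_j \ne v_{max}$ distinct from $\beta$'s true location. For $v_j$ to survive, some robot (necessarily Byzantine, since $\beta$ is the only non-faulty robot with ID $l_\beta$) must sit at $v_j$ displaying $l_\beta$ throughout all $H$ waiting rounds. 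The hard part, and the crux of the argument, is showing that the adversary cannot maintain all surviving spurious candidates simultaneously while also avoiding creating a singleton elsewhere — essentially an accounting argument balancing the $f$ Byzantine robots against the demands placed on them by the confident robots arriving in the center and the unsure robots' multiple candidates.

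I expect the main obstacle to be making this counting argument airtight: precisely tallying how many Byzantine robots are "consumed" sustaining false candidates for the unsure robots versus how many are needed to suppress singleton IDs among the confident robots in the center, and showing these demands overcommit the $f$ Byzantine robots unless some $P_\beta$ shrinks. I would structure this as: assume for contradiction that case (i) fails \emph{and} $\Phi_{t'+H} = \Phi_{t'}$ (no robot shrinks its set); then every candidate of every unsure robot is sustained by a distinct-looking Byzantine presence across all $H$ rounds, and combined with the confident robots' distinct IDs in the center, derive that the total Byzantine capacity is exceeded or that a singleton is unavoidable, giving the contradiction.
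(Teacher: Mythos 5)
Your reduction via Lemma \ref{lemma-4} and Lemma \ref{lemma-3} is exactly right, and your observation that some unsure robot must exist is correct, but the proposal stops precisely at the crux: you explicitly defer the ``accounting argument'' showing the adversary cannot simultaneously sustain all spurious candidates and suppress all center singletons, and without it there is no proof. Worse, the witness you propose---an arbitrary unsure robot $\beta$---does not work: the adversary can dedicate Byzantine robots to displaying $l_\beta$ at all of $\beta$'s spurious candidates for the entire waiting period (there are at most $f$ spurious candidates in total across all non-faulty robots, since $\sum_\alpha\left(|P_{\alpha,t'}|-1\right) = \Phi_{t'}-(m-f) \le f$ by Proposition \ref{pro-1}), so no removal is forced at that particular $\beta$. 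The decrease in $\Phi$ must instead be extracted from a carefully chosen robot.

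The missing idea, which is how the paper closes the argument in two lines, is a single-round pigeonhole applied at round $t'+H$: each of the $f$ Byzantine robots displays exactly one ID in that round, while the at least $f+1$ non-faulty robots have distinct IDs, so some non-faulty robot $\beta$ has an ID that is a singleton \emph{in the entire graph} at round $t'+H$ (and this is visible to everyone, since by Lemmas \ref{lemma-2} and \ref{lemma-3} the common snapshot $S$ contains all of $G$). Since case (i) fails, $C(S)$ contains no singleton ID, so this $\beta$ lies outside the center; hence $|P_{\beta,t'}|>1$, because a robot with $|P_\beta|=1$ would have marched into the center during this execution. And because $l_\beta$ appears nowhere except at $\beta$'s own node in round $t'+H$, every other candidate in $P_\beta$ fails the check and is removed, giving $\Phi_{t'+H}\le\Phi_{t'}-1$ immediately. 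No multi-round tally balancing confident against unsure robots is needed; indeed, your planned accounting, pushed to completion, collapses to exactly this one-round observation, since a Byzantine robot sustaining a spurious candidate of $\beta$ is by definition a second occurrence of $l_\beta$, i.e., the negation of $l_\beta$ being singleton.
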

\begin{proof}
	By Lemma \ref{lemma-4}, exactly one of the following occurs at the start of round $t'+H$:
	\begin{itemize}
		\item All non-faulty robots set their $v_{target}$ variable to some non-null value, or,
		
		\item Variable $v_{target}$ is null for every robot. By Lemma \ref{lemma-3}, we know that in round $t+H'$, all non-faulty robots have the same snapshot view $S$, and, by Lemma \ref{lemma-2}, $S$ contains all the nodes of $G$. As there are at least $f+1$ non-faulty robots and exactly $f$ Byzantine robots, there must be at least one non-faulty robot $\beta$ whose ID will be a singleton ID in $S$. But since $v_{target}$ is null for every non-faulty robot, this implies that there is no singleton ID in $C(S)$ in round $t'+H$, and so $\beta$ is located outside of $C(S)$. According to the description of March-to-Center, it must be the case that $|P_\beta|>1$ in round $t'$, because otherwise $\beta$ would have moved to a node in the center of its snapshot view in this execution of March-to-Center. Consequently, according to March-to-Center, the robot $\beta$ removes all other nodes from $P_{\beta}$ except the one node that contains its ID $l_\beta$ (as $l_\beta$ is a singleton ID). Thus, the value of $|P_{\beta}|$ decreases during some round in the range $t',\ldots,t'+H$, so it follows that $\Phi_{t'+H} \leq \Phi_{t'} - 1$. \end{itemize}  \end{proof}

\begin{theorem} \label{theo-1}
	There exists a positive integer $k \leq f+1$ such that every non-faulty robot sets its variable $v_{target}$ to the same non-null value at the start of round $(m+2)n^2+kH$.
\end{theorem}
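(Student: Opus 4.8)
The plan is to combine the potential-function bounds of Proposition \ref{pro-1} with the dichotomy of Lemma \ref{lemma-7}, using Lemma \ref{lemma-5} to guarantee that the executions of March-to-Center line up in consecutive blocks of $H$ rounds. The central observation is that the potential $\Phi_t$ is integer-valued, bounded above by $m$ and below by $m-f$, so it can strictly decrease at most $f$ times. Since Lemma \ref{lemma-7} tells us that every execution of March-to-Center either makes all non-faulty robots set $v_{target}$ to a non-null value (case (i)) or strictly decreases $\Phi$ (case (ii)), case (ii) cannot recur more than $f$ times, forcing case (i) to occur within the first $f+1$ executions.

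To make this precise, I would first invoke Lemma \ref{lemma-5}: as long as no robot has yet set $v_{target}$ to a non-null value, every non-faulty robot begins its $i$-th execution of March-to-Center in round $(m+2)n^2 + (i-1)H$ and finishes it at the start of round $(m+2)n^2 + iH$. I would then argue by contradiction: suppose case (i) of Lemma \ref{lemma-7} fails in each of the first $f+1$ executions. Then case (ii) holds in each of them, so $\Phi$ drops by at least $1$ across each block of $H$ rounds, giving $\Phi_{(m+2)n^2+(f+1)H} \leq \Phi_{(m+2)n^2} - (f+1) \leq m-(f+1)$. This is strictly less than $m-f$, contradicting the lower bound of Proposition \ref{pro-1}. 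Hence there is a first execution, say the $k$-th with $k \leq f+1$, in which case (i) occurs.

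Finally, I would apply Corollary \ref{col-1}. Taking $k$ to be the smallest index at which some non-faulty robot sets $v_{target}$ to a non-null value, the corollary directly yields that all non-faulty robots set $v_{target}$ to the same non-null value at the start of round $(m+2)n^2 + kH$, which completes the proof.

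The point requiring the most care is ensuring that the hypotheses of Lemma \ref{lemma-5} and Lemma \ref{lemma-7} are legitimately in force throughout the first $k-1$ executions: both implicitly require that no robot has set $v_{target}$ to a non-null value yet, so that the executions are genuinely synchronized and the dichotomy applies cleanly. Since $k$ is defined as the first execution in which $v_{target}$ becomes non-null, every earlier execution has $v_{target}$ equal to \emph{null} at every non-faulty robot, which is exactly what licenses the repeated application of these lemmas and the telescoping decrease of $\Phi$.
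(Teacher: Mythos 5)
Your proof is correct and takes essentially the same route as the paper: it uses the synchronization of executions (Lemma \ref{lemma-5} together with Lemma \ref{lemma-4}), the dichotomy of Lemma \ref{lemma-7}, the bounds $m-f \leq \Phi_t \leq m$ from Proposition \ref{pro-1}, and Corollary \ref{col-1} to identify the exact round. The only (immaterial) organizational difference is that the paper handles the $(f+1)^{th}$ execution constructively --- deducing $\Phi_{(m+2)n^2+fH} = m-f$, hence $|P_\alpha|=1$ for every non-faulty robot, so all march to the center where a singleton ID must appear by pigeonhole --- whereas you fold that final execution into a single contradiction by applying Lemma \ref{lemma-7} once more, which is equally valid since that lemma is stated for any synchronized execution of March-to-Center.
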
 

\begin{proof}
	First, suppose that there is at least one non-faulty robot that sets its $v_{target}$ to a non-null value during one of its first $f$ executions of March-to-Center. In this case, the desired result follows directly from Corollary \ref{col-1}. So, in what follows, we assume that all non-faulty robots have $v_{target}=null$ during the first $f$ executions of March-to-Center. Therefore, all non-faulty robots start their $(f+1)^{th}$ execution of March-to-Center with $v_{target}=null$, and by Lemma \ref{lemma-5}, they start this execution in round $(m+2)n^2 + fH$. By Lemmas \ref{lemma-2} and \ref{lemma-3}, each non-faulty robot starts this execution with the same snapshot view, which we'll denote by $S$, that contains all the nodes of $G$.
	
	By Lemma \ref{lemma-7}, after each of the first $f$ executions of March-to-Center, the value of $\Phi$ decreases by at least 1. It follows that $\Phi_{(m+2)n^2 + fH} \leq \Phi_{(m+2)n^2} - f$. However, by Proposition \ref{pro-1}, we know that $\Phi_{(m+2)n^2} \leq m$ and $\Phi_{(m+2)n^2 + fH} \geq m-f$, so altogether we conclude that $\Phi_{(m+2)n^2 + fH} = m-f$. But $m-f$ is the number of non-faulty robots, so the sum $\Phi_{(m+2)n^2 + fH} = \sum_{\alpha}|P_{\alpha,(m+2)n^2 + fH}|$ has $m-f$ non-zero terms. This implies that each $|P_{\alpha,(m+2)n^2 + fH}|$ is equal to exactly 1. Therefore, by the description of March-to-Center, all non-faulty robots move to a node in the center graph of their snapshot view $S$. This means that there are at least $f+1$ non-faulty robots in the center of $S$ in round $(m+2)n^2 + (f+1)H$, and at least one of their ID's is a singleton ID since there are at most $f$ Byzantine nodes. Thus, by the description of March-to-Center, every non-faulty robot sets its $v_{target}$ to the same node: the node that contains a robot with smallest singleton ID in the center graph of $S$, which proves the desired statement with $k=f+1$. \end{proof}

Now we come to the third part of the algorithm which consists of the executions of Merge. By the description of the H-View-Algorithm, non-faulty robots start executing their Merge operation immediately after setting a non-null value of $v_{target}$. Moreover, by Theorem \ref{theo-1}, we see that every robot sets its $v_{target}$ variable to the same non-null value in the same round, and so every non-faulty robot starts executing its first execution of Merge at the same time as well. More specifically, we denote by $k$ the number of executions of March-to-Center performed by the non-faulty robots, and conclude that all non-faulty robots start their first execution of Merge in round $(m+2)n^2 + kH$. By the description of Merge, each execution of Merge consists of exactly $H$ rounds, and according to the H-View-Algorithm, an additional $H$ rounds are then used to perform the steps of Merge in reverse. These observations imply the following fact.
\begin{lemma} \label{lemma-9}
	For any positive integer $i$, if an $i^{th}$ execution of Merge is performed, then all non-faulty robots start this execution in round $(m+2)n^2 + (k+2(i-1))H$.
\end{lemma}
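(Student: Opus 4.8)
The plan is to prove the statement by induction on $i$, leveraging the synchrony of the non-faulty robots together with the fact that each iteration of the Merge loop has a fixed, path-independent duration of $2H$ rounds. First I would establish the base case $i=1$: by Theorem \ref{theo-1}, every non-faulty robot sets $v_{target}$ to the same non-null value at the start of round $(m+2)n^2 + kH$, and by the description of the H-View-Algorithm each robot begins its first execution of Merge immediately thereafter. Since $(m+2)n^2 + (k + 2(1-1))H = (m+2)n^2 + kH$, the claim holds for $i=1$.

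For the inductive step, I would suppose the $(i+1)^{th}$ execution of Merge is performed; since the Repeat loop runs sequentially, this presupposes that the $i^{th}$ execution was also performed, so the inductive hypothesis applies and all non-faulty robots start their $i^{th}$ Merge in round $t_i := (m+2)n^2 + (k + 2(i-1))H$. I then argue that they all start the $(i+1)^{th}$ execution exactly $2H$ rounds later. The key observation, which I would verify by inspecting the two branches of the Merge pseudocode, is that a single execution of Merge takes exactly $H$ rounds regardless of which branch the robot follows: in the branch where an inconsistency is detected, the robot spends $t_{\alpha,Move}$ rounds moving and then waits $H - t_{\alpha,Move}$ rounds; in the branch where the path $\pi$ is followed without inconsistency, the robot spends $|\pi|$ rounds moving and then waits $H - |\pi|$ rounds. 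Both totals equal $H$. The subsequent reversal of the Merge steps takes another $H$ rounds, as established in the discussion immediately preceding the lemma. Because all non-faulty robots act synchronously and, by the inductive hypothesis, begin the $i^{th}$ Merge in the same round $t_i$, they complete the Merge and its reversal simultaneously, and therefore all start the $(i+1)^{th}$ Merge in round $t_i + 2H = (m+2)n^2 + (k + 2i)H = (m+2)n^2 + (k + 2((i+1)-1))H$, completing the induction.

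I do not expect a serious obstacle here, since the argument is essentially bookkeeping built on two facts already in place: the common starting round of the first Merge (Theorem \ref{theo-1}) and the synchronous execution model. The only point requiring genuine care is confirming that the duration of a Merge execution is exactly $H$ in \emph{both} branches and independent of the length of the computed path $\pi$ or the round at which an inconsistency is noticed; this is precisely what the padding by the $H - |\pi|$ and $H - t_{\alpha,Move}$ waiting rounds is designed to guarantee. Once this uniform-duration property and the uniform $H$-round reversal are confirmed, synchrony propagates the common start time across all iterations, and the closed-form expression for the starting round follows by arithmetic.
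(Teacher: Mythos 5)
Your proposal is correct and follows essentially the same route as the paper, which proves the lemma via the observations stated just before it: all non-faulty robots start the first Merge together in round $(m+2)n^2+kH$ (Theorem \ref{theo-1}), each Merge execution takes exactly $H$ rounds in both branches thanks to the padding waits $H-|\pi|$ and $H-t_{\alpha,Move}$, and the reversal adds another $H$ rounds. Your explicit induction on $i$ merely formalizes the paper's bookkeeping argument, so there is nothing to add.
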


Our final goal is to show that all non-faulty robots gather at $v_{target}$ after at most $(f+2)-k$ executions of Merge, where $k$ is the number of March-to-Center operations executed by the non-faulty robots. Before proving this in Theorem \ref{theo-2}, we establish the following technical results.

\begin{lemma} \label{lemma-10}
	For any $t \geq 0$, suppose that $v$ is a node such that at least $m-f$ robots are located at $v$ at the start of round $t$. Then, the local view at $v$ in round $t$ is unique. More precisely, for any node $v' \neq v$, we have $Lview(v',t) \neq Lview(v,t)$.
\end{lemma}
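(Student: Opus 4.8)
The plan is to argue by contradiction using a counting argument on the total number of robots in the team. Suppose, for the sake of contradiction, that there is a node $v' \neq v$ with $Lview(v',t) = Lview(v,t)$. Recall that the local view at a node includes the list of IDs of all robots present at that node, with one list entry contributed by each robot (so that a Byzantine robot forging a duplicate ID simply adds another entry with that ID). In particular, the length of the ID list at a node equals the number of robots occupying that node in round $t$. Since equality of the two local views forces equality of the two ID lists, the two nodes $v$ and $v'$ must be occupied by the same number of robots.

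Next I would invoke the hypothesis of the lemma: at least $m-f$ robots occupy $v$ at the start of round $t$, so the ID list at $v$ has length at least $m-f$. By the previous paragraph, the identical list at $v'$ also has length at least $m-f$, so $v'$ is occupied by at least $m-f$ robots as well. The key structural fact is that each (physical) robot is located at exactly one node in any given round, so the set of robots at $v$ is disjoint from the set of robots at $v'$. Counting these two disjoint groups, the total number of robots in the team is at least $(m-f) + (m-f) = 2(m-f)$.

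Finally I would derive the contradiction from the standing assumption (stated at the start of the Algorithm Description) that the number of non-faulty robots is at least $f+1$, i.e. $m - f \geq f+1$. From this we get $2(m-f) = (m-f) + (m-f) \geq (m-f) + (f+1) = m+1 > m$. But the team contains exactly $m$ robots, so it is impossible for the total number of robots to be at least $2(m-f) > m$. This contradiction shows that no such $v'$ can exist, which establishes that $Lview(v',t) \neq Lview(v,t)$ for every $v' \neq v$, as required.

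I do not anticipate a genuine obstacle, since the argument is a short counting bound; the only point requiring care is the interpretation of the ID ``list'', so I would state explicitly that it records one entry per robot present (hence its length is exactly the number of occupying robots, including Byzantine robots that may be duplicating IDs), as this is precisely what makes the length-comparison step valid. I would also remark that the degree component of the local view is never used: the multiset of robot IDs alone already forces the contradiction.
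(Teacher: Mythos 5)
Your proposal is correct and uses essentially the same counting argument as the paper: both hinge on the facts that the ID list length equals the number of occupying robots and that $m-f \geq f+1$ forces a length discrepancy. The only difference is packaging --- the paper argues directly (at most $f$ robots can be at $v'$, and $m-f > f$, so the lists have different lengths), whereas you reach the same conclusion by contradiction via $2(m-f) \geq m+1 > m$.
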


\begin{proof}
	For any $v,v'$ such that $v \neq v'$, if there are at least $m-f$ robots at $v$ in round $t$, there can be at most $f$ robots at $v'$ in round $t$. Since there are at least $f+1$ non-faulty robots, it follows that $m \geq 2f+1$, so $m-f > f$. In particular, this means that the number of ID's in $Lview(v,t)$ is strictly greater than the number of ID's in $Lview(v',t)$, so $Lview(v,t) \neq Lview(v',t)$. 
	\end{proof}


\begin{lemma} \label{lemma-11}
	Consider any execution of Merge by the non-faulty nodes, and suppose that the execution starts in round $t'$. Then at least one of the following holds: (i) all non-faulty robots are gathered at $v_{target}$ in round $t'+H$, or, (ii) $\Phi_{t'+H} \leq \Phi_{t'} - 1$.	
\end{lemma}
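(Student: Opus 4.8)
The plan is to mirror the structure of Lemma~\ref{lemma-7}. Since each set $P_\alpha$ only ever loses elements, $\Phi$ is non-increasing, so to establish alternative (ii) it suffices to exhibit a single non-faulty robot that deletes at least one node from its candidate set during rounds $t',\ldots,t'+H$. I would therefore argue the contrapositive of the dichotomy: assume that alternative (i) fails and show that alternative (ii) must hold. To this end, partition the non-faulty robots into the set $A$ of those located at $v_{target}$ at the end of the execution and the set $B$ of those that are not; by assumption $B\neq\emptyset$, and $|A|+|B|$ equals the number of non-faulty robots, which is at least $f+1$.

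First I would dispose of the robots that already know their location. If a non-faulty robot $\beta$ has $|P_\beta|=1$ at the start of Merge, then (by the argument used in Proposition~\ref{pro-1}) the unique node in $P_\beta$ is $\beta$'s true location, so the shortest path it computes to $v_{target}$ is genuinely traversable; $\beta$ follows it without inconsistency, arrives at $v_{target}$, and none of the checks in lines~7--9 fire, since it really is at $v_{target}$, its view contains all $n$ nodes, and it sees its own ID there. Hence every such $\beta$ lies in $A$, and consequently every robot of $B$ begins Merge with $|P_\beta|>1$ and, since it does not reach $v_{target}$, its guessed start node $v_0$ is not its true location.

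The heart of the proof is a counting argument bounding the adversary. Suppose, for contradiction, that no robot of $B$ deletes a node. Then each $\beta\in B$ follows its guessed path with no detected inconsistency (line~3 never triggers), comes to rest at some node $u_\beta\neq v_{target}$, and passes every check of lines~7--9 throughout its waiting period. Passing line~7 forces $u_\beta$ to see all $n$ nodes, so $\beta$'s snapshot contains all of $G$ and its comparisons are made against the true configuration at round $t'+H$, the round in which all non-faulty robots have settled ($A$ at $v_{target}$ and each $\beta$ at its $u_\beta$). Passing line~8 means $l_\beta$ appears at $v_{target}$; since the genuine $\beta$ is at $u_\beta\neq v_{target}$, this copy of $l_\beta$ is Byzantine, and as the IDs $\{l_\beta:\beta\in B\}$ are distinct and each Byzantine robot holds a single ID, at least $|B|$ distinct Byzantine robots must sit on $v_{target}$. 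Passing line~9 means $Lview(u_\beta)=Lview(v_{target})$, so every ID of a robot of $A$ (all physically at $v_{target}$) must also appear at $u_\beta$, forcing at least $|A|$ Byzantine robots onto $u_\beta$. Fixing any $\beta_0\in B$, the Byzantine robots on $v_{target}$ and those on $u_{\beta_0}$ are disjoint, since a robot occupies one node, so the team contains at least $|A|+|B|\geq f+1$ Byzantine robots, contradicting the presence of only $f$. Hence some robot of $B$ deletes a node and $\Phi_{t'+H}\leq\Phi_{t'}-1$.

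I expect the main obstacle to be the bookkeeping that makes the counting step rigorous, in particular ensuring that the checks of lines~8 and~9 are read at a common round in which every non-faulty robot has already reached its final position. This relies on the synchronization of the Merge executions, namely that all non-faulty robots start in round $t'$ and finish in round $t'+H$ (analogous to Lemmas~\ref{lemma-2}--\ref{lemma-5} for March-to-Center) and on each relevant snapshot containing all of $G$, so that the compared views are accurate and the Byzantine robots cannot hide the mismatch. The passage from ``all checks pass'' to ``$|A|+|B|$ Byzantine robots'' via distinctness of the forged IDs and disjointness of the two node-populations is the crux of the argument; the remaining ingredients (the non-increase of $\Phi$ and the behaviour of the $|P_\beta|=1$ robots) are routine and closely parallel Lemma~\ref{lemma-7} and Lemma~\ref{lemma-10}.
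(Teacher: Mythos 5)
Your proof is correct, but it takes a genuinely different route from the paper's. The paper argues by a case split on the occupancy of $v_{target}$ in round $t'+H$: if at least $m-f$ robots are there, Lemma~\ref{lemma-10} makes $Lview(v_{target},t'+H)$ unique, so every non-faulty robot elsewhere fails the local-view comparison (line~9 of Merge) and deletes; if fewer than $m-f$ robots are there, then since the $m-f$ non-faulty robots carry distinct IDs and each robot holds one ID, some non-faulty ID is absent from $v_{target}$, so that robot fails the line~8 check and deletes. You instead run a single unified proof by contradiction: assuming no robot of $B$ deletes, you extract $|B|$ forged IDs at $v_{target}$ from the line~8 checks and $|A|$ forged IDs at $u_{\beta_0}$ from the line~9 check of a fixed $\beta_0$, and the disjointness of the two populations forces at least $|A|+|B| = m-f \geq f+1$ Byzantine robots, contradicting $f$. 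The paper's version is shorter and modular---each case invokes only one of the two checks, and the counting is encapsulated in Lemma~\ref{lemma-10}---while yours avoids the case split and Lemma~\ref{lemma-10} entirely, at the price of tracking both checks simultaneously and pinning all of them to the common round $t'+H$; that synchronization (all non-faulty robots settled by round $t'+H$, with checks still in effect in that round) is an assumption the paper's proof makes just as implicitly, so you are not worse off there. Your preliminary step about robots with $|P_\beta|=1$ landing in $A$ is correct but dispensable: the counting argument only needs that each $\beta\in B$ rests at some $u_\beta\neq v_{target}$ and passes all checks, regardless of $|P_\beta|$.
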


\begin{proof}
	Assume that (i) does not hold in round $t'+H$, i.e., at least one non-faulty robot is not located at $v_{target}$ in round $t'+H$. There are two possibilities: 
	\begin{itemize}
		\item {\bf There are at least $m-f$ robots at $v_{target}$ in round $t'+H$.} By Lemma $\ref{lemma-10}$, each robot $\beta$ that is at a node $v' \neq v_{target}$ in round $t'+H$ has a local view $Lview(v',t'+H)$ that is different than $Lview(v_{target}, t'+H)$. Hence, according to the description of Merge, each such robot $\beta$ removes a node from its $P_\beta$, i.e., the value of $|P_\beta|$ decreases in some round in the range $t',\ldots,t'+H$. It follows that $\Phi_{t'+H} \leq \Phi_{t'}-1$.
		\item {\bf There are fewer than $m-f$ robots at $v_{target}$ in round $t'+H$.} As the number of non-faulty robots is $m-f$, it follows that there is at least one non-faulty robot $\alpha$ whose ID $l_\alpha$ is not seen at $v_{target}$ in $\alpha$'s snapshot view in round $t'+H$. Hence, according to the description of Merge, $\alpha$ removes a node from its $P_\alpha$, i.e., the value of $|P_\alpha|$ decreases in some round in the range $t',\ldots,t'+H$. It follows that $\Phi_{t'+H} \leq \Phi_{t'}-1$.
	\end{itemize}
	
	


\end{proof}

\begin{lemma}\label{atTarget}
	During the execution of the H-View-Algorithm, if $k \geq 1$ executions of March-to-Center are performed followed by $f+2-k$ executions of Merge, then all non-faulty robots are gathered at $v_{target}$.
\end{lemma}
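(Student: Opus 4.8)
The plan is to combine a budget argument on the potential function $\Phi$ with a persistence argument showing that gathering, once achieved, is never undone. First I would quantify the total amount by which $\Phi$ can decrease. By Proposition \ref{pro-1} we have $m-f \leq \Phi_t \leq m$ for every $t \geq (m+2)n^2$, and since each $P_{\alpha,t}$ only loses elements over time, $\Phi$ is non-increasing throughout the execution; hence the total decrease of $\Phi$ from the round in which $P_\alpha$ is first assigned until the end is at most $f$. Now, the first $k-1$ executions of March-to-Center all end with $v_{target}=null$ (by definition of $k$, the value of $v_{target}$ first becomes non-null during the $k$-th execution), so Lemma \ref{lemma-7} forces case (ii) in each of them, meaning each of these $k-1$ executions decreases $\Phi$ by at least $1$. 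This already spends $k-1$ units of the available budget of $f$.

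Next I would bound the number of Merges that fail to gather. By Lemma \ref{lemma-11}, every execution of Merge either gathers all non-faulty robots at $v_{target}$ or decreases $\Phi$ by at least $1$. Since $k-1$ units of budget were already consumed by March-to-Center and the total decrease is at most $f$, at most $f-(k-1)=f-k+1$ of the Merge executions can be of the non-gathering type. As the hypothesis supplies $f+2-k=(f-k+1)+1$ executions of Merge, at least one of them must be a gathering Merge, i.e.\ at least one Merge ends with all non-faulty robots located at $v_{target}$.

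It remains to show that gathering is preserved by all subsequent Merges, so that in particular the final ($(f+2-k)$-th) Merge ends with everyone at $v_{target}$; this is the step I expect to be the main obstacle, since the Byzantine robots may move arbitrarily between consecutive Merges. The key observation is that a non-faulty robot's trajectory during Merge depends only on the fixed graph structure: the shortest port sequence $\pi$ from its guessed origin $v_0 \in P_\alpha$ to the common, fixed node $v_{target}$ is determined by the graph alone, and following $\pi$ from the robot's true location $v_{max}$ visits a fixed sequence of physical nodes, because the mid-route inconsistency checks (a missing port, or a mismatched incoming port) test only port and degree information, which no Byzantine robot can alter. Consequently, if a non-faulty robot $\alpha$ physically reaches $v_{target}$ during a gathering Merge, then from $v_{target}$ it sees all $n$ nodes, sees its own ID $l_\alpha$ at $v_{target}$, and has a local view identical to that of $v_{target}$ (its current node); hence none of the verification checks in Merge fire, $\alpha$ does not remove $v_0$, and $P_\alpha$ is unchanged. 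After reversing back to $v_{max}$, the next Merge uses the same first element $v_0$ and recomputes the same $\pi$, so $\alpha$ again arrives at $v_{target}$. By induction this shows that once all non-faulty robots are gathered at $v_{target}$, they remain gathered in every later Merge.

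Finally I would assemble the two parts. By persistence, every Merge following a gathering Merge is itself a gathering Merge, so the non-gathering Merges are precisely those preceding the first gathering Merge; since there are at most $f-k+1$ of them, the first gathering Merge is the $j$-th for some $j \leq f-k+2 = f+2-k$. Persistence then guarantees that the $(f+2-k)$-th Merge is also a gathering Merge, so after the prescribed $f+2-k$ executions of Merge all non-faulty robots are located at $v_{target}$, which is exactly the claim.
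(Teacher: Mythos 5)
Your proof is correct, and it uses the same ingredients as the paper (Proposition \ref{pro-1}, Lemma \ref{lemma-7}, Lemma \ref{lemma-11}, the non-increase of $\Phi$), but the endgame is organized differently. The paper splits into two cases: either some of the first $f+1-k$ Merges gathers everyone, in which case it invokes Lemma \ref{lemma-10} (uniqueness of the local view at a node holding at least $m-f$ robots) to argue that no robot alters $P_\alpha$, so the gathered configuration replays in every later Merge; or none of the first $f+1-k$ Merges gathers, in which case the potential is driven down to exactly $m-f$, forcing $|P_{\alpha}|=1$ for every non-faulty robot, so that in the final Merge every robot navigates from its \emph{true} location and gathering is guaranteed outright, with no appeal to persistence in that branch. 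You instead run a single pigeonhole on the budget: the $k-1$ null executions of March-to-Center and each non-gathering Merge each cost at least one unit of a total budget of $f$, so at most $f+1-k$ of the $f+2-k$ Merges can fail to gather, hence some Merge gathers, and your persistence argument (trajectory determinism, since the mid-route checks depend only on ports and degrees, plus the end-of-route checks passing at $v_{target}$) carries the gathered state to the last Merge. What each buys: the paper's Case 2 yields the stronger structural fact of full self-localization and sidesteps persistence in that branch; your version is more uniform and avoids the exact-value computation $\Phi=m-f$, but it leans entirely on persistence, which is the one step treated somewhat informally in both arguments --- strictly speaking, a Merge could end with all robots at $v_{target}$ even though some robot failed a mid-route check and halted at $v_{target}$ by coincidence, having already deleted $v_0$, so ``gathering Merge'' does not literally imply ``no $P_\alpha$ changed.'' Your budget framing actually patches this more gracefully than the paper's case split: counting Merges in which \emph{any} deletion occurs (rather than merely non-gathering Merges) shows that at least one of the $f+2-k$ Merges is simultaneously gathering and deletion-free, and from such a Merge your determinism induction goes through verbatim.
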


\begin{proof}
	By the description of the H-View-Algorithm and Corollary 1, if $k$ executions of March-to-Center are performed, then $v_{target}$ was set for the first time by all non-faulty robots at the end of the $k^{th}$ execution of March-to-Center. By Lemma \ref{lemma-7}, after each of the first $k-1$ executions of March-to-Center, the value of $\Phi$ decreases by at least 1. It follows that $\Phi_{(m+2)n^2 + (k-1)H} \leq \Phi_{(m+2)n^2} - (k-1)$. By Proposition \ref{pro-1}, we know that $\Phi_{(m+2)n^2} \leq m$, so it follows that $\Phi_{(m+2)n^2 + (k-1)H} \leq m-(k-1)$. Since the value of $\Phi$ never increases (the algorithm only ever removes nodes from the $P_\alpha$ sets) it follows that $\Phi_{(m+2)n^2 + kH} \leq m-(k-1)$ as well, where round $(m+2)n^2 + kH$ is when the first Merge execution begins. Now, we consider the first $f+1-k$ executions of Merge by the non-faulty robots, and we consider two cases:
	
	\begin{itemize}
		\item Suppose that, for some $i \in \{1,\ldots,f+1-k\}$, all non-faulty robots are gathered at $v_{target}$ at the end of the $i^{th}$ execution of Merge. Since the number of non-faulty robots is $m-f$, it follows that there would be at least $m-f$ robots at $v_{target}$. By Lemma \ref{lemma-10}, the local view at $v_{target}$ would be unique in $G$, and the local view of each non-faulty robot would exactly match it. Hence, according to the description of Merge, no non-faulty robot would modify its $P_\alpha$ set, and so the next execution of Merge (if any) would start from the same node $v_0$. It follows that in all subsequent executions of Merge (in particular, the $(f+2-k)^{th}$ execution) all non-faulty robots will be gathered at $v_{target}$.
		
		\item Suppose that, for every $i \in \{1,\ldots,f+1-k\}$, at least one non-faulty robot is not located at $v_{target}$ at the end of the $i^{th}$ execution of Merge. Then, according to Lemma \ref{lemma-11}, the value of $\Phi$ decreases by at least 1 in each such execution. As the value of $\Phi$ was bounded above by $m-(k-1)$ at the start of the first Merge execution, and it decreases by at least $f+1-k$ during the first $f+1-k$ executions of Merge, it follows that, after the $(f+1-k)^{th}$ execution of Merge, the value of $\Phi$ is at most $m-f$. However, by Proposition \ref{pro-1}, we know that $\Phi$ is at least $m-f$, so altogether we conclude that the value of $\Phi$ after the $(f+1-k)^{th}$ execution of Merge is exactly $m-f$. But $m-f$ is the number of non-faulty robots, so the summation represented by $\Phi$ has $m-f$ non-zero terms. This implies that each $|P_{\alpha}|$ is equal to exactly 1 for each non-faulty robot $\alpha$. Then, in the final execution of Merge, i.e., in execution $f+2-k$, each non-faulty robot will compute a path to $v_{target}$ using its snapshot view, but using its actual location as starting node $v_0$. This means that all non-faulty nodes will be located at $v_{target}$ at the end of execution $f+2-k$ of Merge.
	\end{itemize}
\end{proof}

Finally, we verify that the H-View-Algorithm ensures that Merge is executed at least $f+2-k$ times after $k$ executions of March-to-Center. The Merge operation is executed until the value of $phase$ is greater than $\lceil m/2 \rceil$, and from the assumption that the number of non-faulty robots is at least $f+1$, we know that $m \geq 2f+1$. In particular, this means that the combined number of March-to-Center and Merge executions is at least $f+1$, and then one more Merge is executed after exiting the `repeat' loop. This concludes the proof of correctness of the H-View-Algorithm.

\begin{theorem} \label{theo-2}
	In any $n$-node graph with radius $R$, if the H-View-Algorithm is performed by any team of $m$ robots consisting of $f$ Byzantine robots and at least $f+1$ non-faulty robots with visibility $H \geq R$, then Gathering is solved within $(m+2)\cdot n^2+ H\cdot m \in O(mn^2)$ rounds.
\end{theorem}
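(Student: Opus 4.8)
The plan is to establish Theorem \ref{theo-2} by assembling the correctness and running-time facts proved in the three analysis parts. The correctness argument is essentially complete already: Lemma \ref{atTarget} tells us that if $k \geq 1$ executions of March-to-Center are followed by $f+2-k$ executions of Merge, then all non-faulty robots are gathered at $v_{target}$. Theorem \ref{theo-1} guarantees that such a $k$ exists with $k \leq f+1$ (so that $f+2-k \geq 1$, meaning at least one Merge is performed). Thus the remaining work is twofold: first, verify that the H-View-Algorithm actually schedules enough Merge executions so that the $(f+2-k)^{\text{th}}$ Merge is reached; and second, compute the total round count and confirm it lies in $O(mn^2)$.

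First I would recall the loop-termination condition of the H-View-Algorithm: the repeat loop containing Merge runs until $phase > \lceil m/2 \rceil$, after which exactly one more Merge is executed outside the loop. The variable $phase$ is incremented once per March-to-Center and once per Merge-plus-reversal iteration, so after $k$ March-to-Center executions and $j$ in-loop Merge executions we have $phase = k + j + 1$. The loop continues as long as $phase \leq \lceil m/2 \rceil$, i.e.\ as long as $k + j \leq \lceil m/2 \rceil - 1$, and then the final Merge brings the total Merge count to $j+1$. Using $m \geq 2f+1$ (which follows from there being at least $f+1$ non-faulty and $f$ Byzantine robots), we have $\lceil m/2 \rceil \geq f+1$, and a short arithmetic check shows the total number of Merge executions is at least $\lceil m/2 \rceil + 1 - k \geq (f+1) + 1 - k = f+2-k$. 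This is exactly the number of Merge executions required by Lemma \ref{atTarget}, so Gathering is indeed solved. The paragraph immediately preceding the theorem in the excerpt already sketches precisely this bookkeeping, so I would simply make it rigorous.

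For the running time, I would add up the phases. By Lemma \ref{lemma-1}, Find-Lookout and the subsequent wait bring every robot to $v_{max}$ by round $(m+2)n^2$. By Lemma \ref{lemma-5} (combined with the fact that each March-to-Center takes exactly $H$ rounds, as noted in Lemmas \ref{lemma-3} and \ref{lemma-7}), the $k \leq f+1$ executions of March-to-Center consume $kH$ rounds. Then each Merge-plus-reversal iteration takes $2H$ rounds (Lemma \ref{lemma-9}) except the final standalone Merge, which takes $H$. Bounding the total number of Merge executions by $\lceil m/2 \rceil + 1$ and the March-to-Center count by $f+1 \leq \lceil m/2 \rceil + 1$, the combined time spent on March-to-Center and Merge is $O(mH)$, and since $H \geq R$ but also $H$ contributes only through the round counts $t'+H$ (with $H$ at most the diameter, hence at most $n$), one gets a total of $(m+2)n^2 + H \cdot m$ rounds. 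The cleanest way to state this is to write the exact expression $(m+2)n^2 + H\cdot m$ directly, then observe $H \leq n$ (since the radius and diameter are at most $n-1$) to conclude the bound is $O(mn^2)$.

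The main obstacle I anticipate is the scheduling arithmetic rather than any deep argument: one must carefully reconcile the phase counter's increments (it advances on both March-to-Center and Merge iterations) with the two separate repeat loops in Algorithm \ref{alg:HView}, and ensure the off-by-one accounting between the in-loop Merges and the single trailing Merge is handled correctly so that the guaranteed Merge count genuinely meets or exceeds $f+2-k$ for every value of $k$ in the range $1 \leq k \leq f+1$. A secondary subtlety is justifying $H \cdot m \in O(mn^2)$, which requires noting that although $H$ is a given parameter possibly larger than $R$, the round-count contribution of a waiting period is governed by the algorithm's fixed $H$; one should either assume $H = R \leq n$ for the time bound or explicitly remark that the relevant quantity is the radius-bounded waiting time. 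I would resolve this by invoking $H \leq \text{diam}(G) \leq n-1$ for the purpose of the asymptotic statement.
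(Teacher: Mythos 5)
Your proposal is correct and follows essentially the same route as the paper's proof: it assembles Lemma \ref{atTarget}, Theorem \ref{theo-1}, and Lemmas \ref{lemma-1}, \ref{lemma-5}, and \ref{lemma-9} into the same three-phase round count $(m+2)n^2 + kH + (\lceil m/2 \rceil - k)\cdot 2H + H$, uses the same inequality $m \geq 2f+1$ (hence $\lceil m/2 \rceil \geq f+1$) to guarantee at least $f+2-k$ Merge executions, and invokes the same observation $H \leq n$ for the $O(mn^2)$ conclusion. If anything, your phase-counter bookkeeping is slightly more explicit than the paper's, which relegates that check to the paragraph immediately preceding the theorem.
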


\begin{proof}
	By Lemma $\ref{lemma-1}$, every non-faulty robot spends exactly $(m+2)n^2$ rounds for the \textbf{Find-Lookout} operation. Then, by Theorem $\ref{theo-1}$, there exists a positive integer $k\leq f+1$ such that every non-faulty robot sets its variable $v_{target}$ at the start of the round $(m+2)n^2+kH$. More precisely, robots spend exactly $kH$ rounds performing the $\textbf{March-to-Center}$ executions. After that, every robot spends exactly $(\lceil m/2 \rceil-k)2H+H$ rounds for its \textbf{Merge} executions, after which all non-faulty are located at $v_{target}$ (by Lemma \ref{atTarget}. In total, the number of rounds is $(m+2)\cdot n^2 + kH+(\lceil m/2 \rceil-k)2H+H$. For the minimum value of $k=1$, we get that the robots use at most $(m+2)\cdot n^2+ H\cdot m$ rounds to accomplish the gathering. As $H \leq n$ (at most full visibility), the number of rounds is in $O(mn^2)$, i.e., polynomial in the network size and team size.
\end{proof}

\section{Impossibility Results} \label{impossibilty}


First, we recall Theorem 4.7 from \cite{dieudonne2014gathering}, which states that there is no deterministic algorithm that solves Gathering in the presence of $f$ Byzantine robots if the number of non-faulty agents is at most $f$ (and these non-faulty agents know the size of the graph). This impossibility result was proven in a model where robots have no visibility beyond their local view (i.e., visibility $H = 0$). However, the same proof works under the assumption that each non-faulty robot has full visibility of the entire graph in every round, which proves that our algorithm is optimal with respect to the number of non-faulty robots in the team.

\begin{theorem} 
	There is no deterministic algorithm that solves Gathering if the number of Byzantine robots in the team is $f$ and the number of non-faulty robots is at most $f$, even if the non-faulty agents have visibility $H$ equal to the diameter of the graph.
\end{theorem}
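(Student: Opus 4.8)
The plan is to adapt the symmetry-based indistinguishability argument of \cite{dieudonne2014gathering} to the visibility model, the whole point being that full visibility cannot break a symmetry that the Byzantine robots are able to impose. First I would fix a connected graph $G$ equipped with a fixed-point-free, port-preserving automorphism $\sigma$ (for instance, an even cycle $C_{2r}$ with the antipodal map $\sigma(i)=i+r$ and the consistent clockwise/counterclockwise port labeling: $\sigma$ is an involution with $\sigma(v)\neq v$ for every node $v$, and the diameter is $r$, so $H=r$ is full visibility). The crucial consequence of $\sigma$ being port-preserving is that, for any robot configuration that is invariant under $\sigma$, the snapshot view observed from a node $v$ is \emph{identical}, as an abstract port-labeled, ID-annotated view, to the one observed from $\sigma(v)$ -- and this holds for every visibility range, in particular even when $H$ equals the diameter and a robot sees all of $G$. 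Knowledge of $n$ is likewise useless, since both halves yield the same $n$.

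Next I would set up the team. Take exactly $f$ non-faulty robots with distinct IDs $l_1,\ldots,l_f$ and $f$ Byzantine robots, so the number of non-faulty robots is exactly $f$ (I assume $f\ge 2$, the only regime in which Gathering is non-trivial). Split $G$ into halves $A$ and $B=\sigma(A)$; for each ID $l_i$ place one robot at a node $a_i\in A$ and one at $\sigma(a_i)\in B$, so each ID is carried by two robots, one per half, exactly one of which is non-faulty. The adversary follows a \emph{mirroring} strategy: the Byzantine robot sharing an ID with a non-faulty robot always performs the $\sigma$-image of that non-faulty robot's move. A straightforward induction then shows the configuration stays $\sigma$-invariant in every round.

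The heart of the argument is a \emph{trajectory-independence} claim. For a subset $S\subseteq\{1,\ldots,f\}$, let $E_S$ be the execution in which, for each $i\in S$, the $A$-copy of $l_i$ is non-faulty (its $B$-copy Byzantine), and the reverse for $i\notin S$. Using $\sigma$-invariance together with the equality of the views at $v$ and $\sigma(v)$, I would show by induction on the round number that the physical trajectory of \emph{every} robot is the same in all executions $E_S$: each ID $l_i$ has its $A$-copy follow some port sequence $x_i(\cdot)$ and its $B$-copy follow $\sigma(x_i(\cdot))$, independently of $S$. Consequently each non-faulty robot with a given ID sees exactly the same view-history -- and hence makes the same move and termination decisions -- whether it happens to be the $A$-copy or the $B$-copy.

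Finally I would derive the contradiction. Suppose some algorithm solves Gathering. Running $E_{\{1,\ldots,f\}}$ (all $A$-copies non-faulty), correctness forces all $f$ non-faulty robots to terminate in a common round $T$ at a common node $g$, so $x_i(T)=g$ for every $i$. Now run $E_{\{1,\ldots,f-1\}}$, which differs only in that the non-faulty copy of $l_f$ sits in $B$. By trajectory-independence and the equality of mirror views, every non-faulty robot makes the same decisions as before, so all of them still terminate in round $T$; but now the non-faulty copy of $l_f$ is at $\sigma(x_f(T))=\sigma(g)$ while the others are at $g$. Since $\sigma$ is fixed-point-free, $\sigma(g)\neq g$, so the non-faulty robots are not co-located at their common termination round, contradicting correctness on $E_{\{1,\ldots,f-1\}}$. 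I expect the main obstacle to be precisely the trajectory-independence step: one must argue carefully that reassigning which copy of an ID is non-faulty changes no robot's observed view-history (this is where port-preservation of $\sigma$ and the mirroring strategy are both essential), and that this pins the termination round to the same value across executions even though the non-faulty robots have been split between the two halves. The role of visibility is incidental: because $\sigma$ is a global symmetry of the entire labeled configuration, seeing more of the graph -- even all of it when $H$ equals the diameter -- never lets a robot tell its half from the mirror half, so the identical argument applies verbatim.
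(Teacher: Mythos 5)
Your proposal is correct and takes essentially the same approach as the paper: the paper offers no self-contained argument here, instead citing the proof of Theorem 4.7 of \cite{dieudonne2014gathering} and asserting that it survives full visibility, and your construction --- antipodal pairs sharing IDs on an even cycle with a fixed-point-free, port-preserving automorphism, a mirroring adversary, trajectory-independence across the hybrid executions $E_S$, and the final swap of which copy of $l_f$ is non-faulty --- is exactly that argument spelled out. Your key observation, that $\sigma$-invariance of the configuration makes the views from $v$ and $\sigma(v)$ identical as abstract port-labeled, ID-annotated structures for \emph{every} visibility range, is precisely the reason the paper can claim the original $H=0$ proof carries over verbatim to $H$ equal to the diameter.
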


Next, we prove that to solve Gathering in arbitrary graphs, the visibility $H$ of each non-faulty robot must somehow depend on the radius of the graph. In particular, it is not sufficient to fix some constant visibility range. We remark that this does not contradict the existence of previously-known algorithms that work when $H=0$, as those algorithms make additional assumptions that are not present in our model (e.g., knowledge of the graph size, knowledge of the number of Byzantine robots, or whiteboards at the nodes).

\begin{theorem}\label{HLB}
	There is no deterministic algorithm that can solve Gathering when executed in any graph by any team of $m$ robots consisting of $f \geq 0$ Byzantine robots and at least $f+1$ non-faulty robots if the visibility range $H$ of each non-faulty robot is a fixed constant $c$.
\end{theorem}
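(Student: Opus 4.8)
The plan is to argue by contradiction: assume some deterministic algorithm $A$ with a fixed constant visibility range $H=c$ solves Gathering for every graph and every admissible team. Since the claim is a non-existence statement, it suffices to exhibit a single bad instance, so I am free to choose $f$ and the team; I will take $f=1$, two non-faulty robots $a,b$ (so $\mathit{non\text{-}faulty}=2=f+1$), and one Byzantine robot $z$. The graph will be a large even cycle $C_N$ with the symmetric, navigable port labelling (clockwise edge $=$ port $0$, counter-clockwise $=$ port $1$ at every node), and $N$ will be chosen enormous relative to $c$ and relative to the behaviour of $A$. The key phenomenon I want to exploit is that with only radius-$c$ vision a non-faulty robot cannot tell the difference between ``my partner is standing here beside me'' and ``this is a Byzantine impostor carrying my partner's ID, while my real partner is on the far side of the graph.'' Because the adversary knows $A$, the Byzantine robot can \emph{simulate} a non-faulty robot and keep this ambiguity alive.

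\textbf{The core (committing) argument.} Consider the scenario $\mathcal{E}_N$ on $C_N$ in which $a$ starts at node $0$, the real $b$ starts at the antipode (node $N/2$), and $z$ starts at node $0$ carrying ID $b$, where $z$ runs an internal copy of $A$ as if it were a non-faulty robot with ID $b$ co-located with $a$ whose snapshot never contains a far-away robot. As long as $a$ (and $z$) stay more than distance $c$ away from node $N/2$, robot $a$ never sees the real $b$, and its entire view history coincides with that of the limiting execution $\mathcal{I}$ on the two-way-infinite path in which $a$ and a genuine non-faulty $b$ both sit at the origin. Let $T_\infty$ be the round in which $a$ terminates in $\mathcal{I}$ and $R\le T_\infty$ the maximal distance $a$ or its partner reach from the origin. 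If $T_\infty$ is finite, pick $N>4T_\infty+2c$: then in $\mathcal{E}_N$ the views really do agree with $\mathcal{I}$ (neither the real $b$, who can travel at most $T_\infty$ edges, nor $a$ ever comes within $c$ of the other), so $a$ terminates at round $T_\infty$ at a node $w$ within distance $T_\infty$ of node $0$. For Gathering, the real $b$ must also be at $w$ in round $T_\infty$, but $d(N/2,w)\ge N/2-T_\infty>T_\infty$, so $b$ cannot reach $w$ in time. Gathering fails, contradicting correctness of $A$.

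\textbf{The non-termination sub-case.} If instead $A$ never terminates in $\mathcal{I}$ while keeping the robots inside a bounded window $[-R,R]$ with $R<\infty$, then I would drop the Byzantine robot entirely and run the plain valid instance with two co-located non-faulty robots $a,b$ at node $0$ of $C_N$ for $N>2(R+c)$. This is an admissible $f=0$ instance, and since the robots never leave the window they never perceive the cyclic wrap-around, so their execution equals $\mathcal{I}$ forever and they never terminate. That again contradicts correctness, because $A$ must eventually gather and terminate even on this trivial instance (the robots do not \emph{know} that $f=0$, which is exactly why they cannot simply stop on sight of a matching ID).

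\textbf{The main obstacle.} The delicate case, and the one I expect to be the real work, is when $A$ in the limit execution $\mathcal{I}$ both fails to terminate \emph{and} explores unboundedly ($R=\infty$): such a robot wanders off to infinity along the path, so on the finite cycle $C_N$ it will eventually wrap around, its views will diverge from $\mathcal{I}$, and it may then ``measure'' $N$ and decide to stop. Ruling this out is the crux. The plan here is to show that constant visibility plus one Byzantine robot prevents reliable detection of the wrap-around: I would argue that the two executions of $A$ on two different cycle lengths $C_N$ and $C_{N'}$ remain indistinguishable to the non-faulty robots when the Byzantine robot is used to mask the frontier where the exploration closes up (re-creating, within radius $c$, the same local picture the robot would see on a longer cycle), so the robot can never commit to a correct meeting round for both lengths simultaneously, and the reachability obstruction of the core argument applies. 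Making this masking precise—controlling what a single Byzantine robot must reproduce within every radius-$c$ ball along an unbounded trajectory while keeping the global instance admissible—is where the careful bookkeeping lies, and it is the step I would expect to consume most of the proof.
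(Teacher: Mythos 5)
Your Cases 1 and 2 are sound (the indistinguishability bookkeeping in Case 1 checks out: with $N>4T_\infty+2c$ the real $b$ never enters $\alpha$'s radius-$c$ ball, and the impostor $z$ can be driven offline by simulating the infinite-path execution), but the proof as a whole has a genuine gap, and you have named it yourself: Case 3 — the limit execution $\mathcal{I}$ neither terminates nor stays bounded — is only a plan, not an argument. This is not a fringe case; it is exactly what a sensible constant-visibility algorithm would do. For instance, the two robots could separate and sweep the cycle in opposite directions, in which case on every finite cycle they legitimately meet via the wrap-around and may well gather correctly, so your adversary must actively fake such encounters. Your single Byzantine robot is then in trouble: at the moment of a (real or faked) encounter the partner may appear from either the clockwise or the anticlockwise side, and one impostor cannot in general cover both frontiers of an unbounded trajectory while the instance stays admissible. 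The "careful bookkeeping" you defer is therefore the entire difficulty, and the proposal does not close it.

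The paper avoids your trichotomy altogether with one structural move that your argument is missing: instead of anchoring on a limit execution on the infinite path — about which correctness of $A$ says nothing, which is precisely what forces you to entertain non-termination — anchor on a legitimate \emph{finite} instance where correctness guarantees termination. Concretely, the paper first runs $A$ on a small cycle $C_1$ with $2c+2$ nodes and two non-faulty robots $\alpha,\beta$ placed at distance $c+1$ (just out of each other's view, no Byzantine robots); correctness yields a finite gathering round $r_1$. It then builds a cycle of length $4r_1+2(c+1)$, places $\alpha$ as before, puts \emph{two} Byzantine robots carrying ID $l_\beta$ at distance $c+1$ clockwise and anticlockwise of $\alpha$ — two precisely because on $C_1$ the robot $\beta$ sits at distance $c+1$ in both directions and may first become visible from either side, so whichever side $\beta$ appeared from in the small-cycle execution can be replayed while the other impostor retreats in lockstep with $\alpha$ to stay invisible — and parks two fresh non-faulty robots at distance $2r_1+c+1$. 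Then $\alpha$'s views replicate the small-cycle execution, it terminates at round $r_1$, and the distance bound $2r_1+c+1>2r_1$ makes gathering impossible, exactly your Case 1 reachability obstruction. If you replace your $\mathcal{I}$ with such a small-cycle bootstrap (at the cost of $f=2$ rather than $f=1$), your core argument goes through essentially verbatim and your Cases 2 and 3 disappear.
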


\begin{proof}
	Let $c$ be any fixed positive integer. To obtain a contradiction, assume the existence of a deterministic algorithm $A$ that can solve Gathering when executed in any graph by any team of $m$ robots consisting of $f \geq 0$ Byzantine robots and at least $f+1$ non-faulty robots if the visibility range $H$ of each non-faulty robot is equal to $c$.
	
	First, we construct an instance consisting of a cycle graph $C_1=(V_1,E_1)$ with an even number of nodes $|V_1| = 2c+2$. The radius $R_1$ of $C_1$ is $c+1$. At each node $v \in V_1$, the two incident edges are labeled with port numbers $0$ and $1$ such that $0$ leads clockwise and $1$ leads anticlockwise. The initial positions of the robots in $C_1$ are as follows: a non-faulty robot $\alpha$ with ID $l_\alpha$ is placed at some node $v_0$, and a non-faulty robot $\beta$ with ID $l_\beta$ at a node $w$ such that the distance $d(v_0,w)=R_1=c+1$. There are no Byzantine robots in $C_1$. Consider the execution ${EX}_1$ of algorithm $A$ on instance $C_1$. As $A$ is assumed to be a correct algorithm, there exists some round $r_1$ in which robots $\alpha$ and $\beta$ have terminated and gathered at some node $v_{target} \in V_1$. 

	Next, we construct a second instance consisting of a cycle graph $C_2 = (V_2,E_2)$ with an even number of nodes $|V_2| = 4r_1+2(c+1)$. The radius of $R_2$ of $C_2$ is $2r_1+c+1$. At each node $v \in V_2$, the two incident edges are labeled with port numbers $0$ and $1$ such that $0$ leads clockwise and $1$ leads anticlockwise. The initial positions of the robots in $C_2$ are as follows: the non-faulty robot $\alpha$ with ID $l_\alpha$ is placed at node $v_0$ (as in the first instance $C_1$ above), a Byzantine robot with ID $l_\beta$ is placed at a node $v_{CW}$ that is distance exactly $c+1$ away from $v_0$ in the clockwise direction, and another Byzantine robot with ID $l_\beta$ is placed at a node $v_{ACW}$ that is distance exactly $c+1$ away from $v_0$ in the anticlockwise direction. Further, we place $2$ non-faulty robots at a node $w$ such that $d(v_0,w) = R_2 = 2r_1 + c + 1$. These 2 non-faulty robots have distinct ID's that are not equal to $l_\alpha$ or $l_\beta$. The number of Byzantine robots is $f=2$, and there are $3 = f+1$ non-faulty robots (one at $v_0$ and two at $w$). We denote by ${EX}_2$ the execution of algorithm $A$ on instance $C_2$.
	
	We now demonstrate that the Byzantine robots in $C_2$ can behave in such a way that, for each round $t$, the robot $\alpha$ with ID $l_\alpha$ cannot distinguish between executions ${EX}_1$ and ${EX}_2$, i.e., robot $\alpha$'s local view and snapshot view in every round are the same across both executions. This leads to a contradiction: since $\alpha$ terminates its algorithm in round $r_1$ in execution ${EX}_1$, it will also terminate its algorithm in round $r_1$ in execution ${EX}_2$, and since the initial distance between $\alpha$ and the other non-faulty robots is strictly greater than $2r_1$, it follows that $\alpha$ terminates before the non-faulty robots can gather.
	
	First, note that $\alpha$'s visibility range is $c$ in both executions, which means that its snapshot view consists of $2c+1$ nodes in every round of both executions. By the initial placement of the robots in both executions, we note that in round $t=0$ of both executions, there are no robots within distance $c$ of $\alpha$'s initial position $v_0$. So, $\alpha$'s local view in round 0 of both executions consists of a node of degree 2 containing the ID $l_\alpha$, and, $\alpha$'s snapshot view in round 0 of both executions consists of a path of length $2c+1$ nodes with only ID $l_\alpha$ located at the middle node. Further, we note that the two other non-faulty robots in $C_2$ are never visible to $\alpha$ in execution ${EX}_2$: their initial distance to $\alpha$ is $2r_1+c+1$, so in round $r_1$, each of their distances to $\alpha$ is at least $c+1$.
	
	To define the behaviour of the Byzantine robots in $C_2$ during rounds $t = 1,\ldots,r_1$ of execution ${EX}_2$, we observe the execution ${EX}_1$. In particular:
	\begin{itemize}
		\item For each round $t > 0$ of ${EX}_1$ in which $\alpha$ does not see $\beta$ in its snapshot view: the Byzantine robots follow the same port in round $t-1$ of ${EX}_2$ as $\alpha$ did in round $t-1$ in ${EX}_1$. Doing so ensures that both Byzantine robots remain at distance $c+1$ from $\alpha$ at the start of round $t$ in ${EX}_2$, i.e., are not in $\alpha$'s snapshot view.
		\item For each round $t > 0$ of ${EX}_1$ in which $\alpha$ sees $\beta$ in its snapshot view but did not see $\beta$ in its snapshot view in round $t-1$: the Byzantine robot on the appropriate side of $\alpha$ (clockwise or counterclockwise) moves so that it appears at the same node in $\alpha$'s snapshot view in round $t$ of ${EX}_2$ as $\beta$ does in round $t$ of ${EX}_1$. The other Byzantine robot follows the same port as $\alpha$ does in round $t-1$ (so that its distance from $\alpha$ at the start of round $t$ is still $c+1$, i.e., it does not appear in $\alpha$'s snapshot view).
		\item For each round $t > 0$ of ${EX}_1$ in which $\alpha$ sees $\beta$ in its snapshot view and also saw $\beta$ in its snapshot view in round $t-1$: the Byzantine robot that was in $\alpha$'s snapshot view in round $t-1$ of ${EX}_2$ follows the same port in round $t-1$ of ${EX}_2$ as $\beta$ did in round $t-1$ of ${EX}_1$. The other Byzantine robot follows the same port as $\alpha$ does in round $t-1$ (so that its distance from $\alpha$ at the start of round $t$ is still $c+1$, i.e., it does not appear in $\alpha$'s snapshot view).
	\end{itemize}
	
	It is clear from this behaviour that $\alpha$ sees the same thing up to round $r_1$ in both executions ${EX}_1$ and ${EX}_2$: when $\alpha$ sees no other robots in round $t$ of ${EX}_1$, then both Byzantine robots move so that they are both at distance $c+1$ from $\alpha$ in round $t$ of ${EX}_2$; moreover, when $\alpha$ sees $\beta$ in round $t$ of ${EX}_1$, then one Byzantine robot (which has ID $l_\beta$) moves so that its position relative to $\alpha$ in round $t$ of ${EX}_2$ is the same as $\beta$'s relative position to $\alpha$ in round $t$ of ${EX}_1$, while the other Byzantine robot moves so that it is at distance $c+1$ from $\alpha$ in round $t$ of ${EX}_2$.

\end{proof}

We were not able to extend the lower bound argument in Theorem \ref{HLB} to a non-constant visibility range $H$. The reason is that, when we change the underlying graph, the visibility radius of a robot is different in the new graph, so we cannot use indistinguishability to conclude that a robot will behave in the same way in both graphs. Establishing a lower bound on $H$ with respect to $R$ is left as an open problem.


\bibliographystyle{plain}
\bibliography{bibliography}

\begin{thebibliography}{10}

\bibitem{Barra}
Eduardo~Mesa Barrameda, Nicola Santoro, Wei Shi, and Najmeh Taleb.
\newblock Sensor deployment by a robot in an unknown orthogonal region:
  Achieving full coverage.
\newblock In {\em 20th {IEEE} International Conference on Parallel and
  Distributed Systems, {ICPADS} 2014}, pages 951--960, 2014.

\bibitem{Barri}
Lali Barri{\`{e}}re, Paola Flocchini, Eduardo~Mesa Barrameda, and Nicola
  Santoro.
\newblock Uniform scattering of autonomous mobile robots in a grid.
\newblock {\em Int. J. Found. Comput. Sci.}, 22(3):679--697, 2011.

\bibitem{BhagatSurvey}
Subhash Bhagat, Krishnendu Mukhopadhyaya, and Srabani Mukhopadhyaya.
\newblock Computation under restricted visibility.
\newblock In {\em Distributed Computing by Mobile Entities, Current Research in
  Moving and Computing}, pages 134--183. Springer, 2019.

\bibitem{bouchard2016byzantine}
S{\'e}bastien Bouchard, Yoann Dieudonn{\'e}, and Bertrand Ducourthial.
\newblock Byzantine gathering in networks.
\newblock {\em Distributed Computing}, 29(6):435--457, 2016.

\bibitem{bouchard2018byzantineICALP}
S{\'{e}}bastien Bouchard, Yoann Dieudonn{\'{e}}, and Anissa Lamani.
\newblock Byzantine gathering in polynomial time.
\newblock In {\em 45th International Colloquium on Automata, Languages, and
  Programming, {ICALP} 2018}, pages 147:1--147:15, 2018.

\bibitem{ChalFaults}
J{\'{e}}r{\'{e}}mie Chalopin, Yoann Dieudonn{\'{e}}, Arnaud Labourel, and
  Andrzej Pelc.
\newblock Rendezvous in networks in spite of delay faults.
\newblock {\em Distributed Computing}, 29(3):187--205, 2016.

\bibitem{Chal}
J{\'{e}}r{\'{e}}mie Chalopin, Emmanuel Godard, and Antoine Naudin.
\newblock Anonymous graph exploration with binoculars.
\newblock In {\em Distributed Computing - 29th International Symposium, {DISC}
  2015}, pages 107--122, 2015.

\bibitem{CicerSurvey}
Serafino Cicerone, Gabriele~Di Stefano, and Alfredo Navarra.
\newblock Asynchronous robots on graphs: Gathering.
\newblock In {\em Distributed Computing by Mobile Entities, Current Research in
  Moving and Computing}, pages 184--217. Springer, 2019.

\bibitem{DefagoSurvey}
Xavier D{\'{e}}fago, Maria Potop{-}Butucaru, and S{\'{e}}bastien Tixeuil.
\newblock Fault-tolerant mobile robots.
\newblock In {\em Distributed Computing by Mobile Entities, Current Research in
  Moving and Computing}, pages 234--251. Springer, 2019.

\bibitem{dieudonne2014gathering}
Yoann Dieudonn{\'e}, Andrzej Pelc, and David Peleg.
\newblock Gathering despite mischief.
\newblock {\em ACM Transactions on Algorithms (TALG)}, 11(1):1, 2014.

\bibitem{Fisch}
Matthias Fischer, Daniel Jung, and Friedhelm {Meyer auf der Heide}.
\newblock Gathering anonymous, oblivious robots on a grid.
\newblock In {\em 13th International Symposium on Algorithms and Experiments
  for Wireless Sensor Networks, {ALGOSENSORS} 2017}, pages 168--181, 2017.

\bibitem{FloccSurvey}
Paola Flocchini.
\newblock Gathering.
\newblock In {\em Distributed Computing by Mobile Entities, Current Research in
  Moving and Computing}, pages 63--82. Springer, 2019.

\bibitem{MobileRobotsSurveyBook}
Paola Flocchini, Giuseppe Prencipe, and Nicola Santoro, editors.
\newblock {\em Distributed Computing by Mobile Entities, Current Research in
  Moving and Computing}.
\newblock Springer, 2019.

\bibitem{hir2020gather}
Jion Hirose, Junya Nakamura, Fukuhito Ooshita, and Michiko Inoue.
\newblock Gathering with a strong team in weakly byzantine environments.
\newblock {\em CoRR}, abs/2007.08217, 2020.

\bibitem{Hsiang}
Tien{-}Ruey Hsiang, Esther~M. Arkin, Michael~A. Bender, S{\'{a}}ndor~P. Fekete,
  and Joseph S.~B. Mitchell.
\newblock Algorithms for rapidly dispersing robot swarms in unknown
  environments.
\newblock In {\em Fifth International Workshop on the Algorithmic Foundations
  of Robotics, {WAFR} 2002}, pages 77--94, 2002.

\bibitem{Oosh}
Fukuhito Ooshita, Ajoy~K. Datta, and Toshimitsu Masuzawa.
\newblock Self-stabilizing rendezvous of synchronous mobile agents in graphs.
\newblock In {\em Stabilization, Safety, and Security of Distributed Systems -
  19th International Symposium, {SSS} 2017}, pages 18--32, 2017.

\bibitem{PelcCrash}
Andrzej Pelc.
\newblock Deterministic gathering with crash faults.
\newblock {\em Networks}, 72(2):182--199, 2018.

\bibitem{PelcSurvey}
Andrzej Pelc.
\newblock Deterministic rendezvous algorithms.
\newblock In {\em Distributed Computing by Mobile Entities, Current Research in
  Moving and Computing}, pages 423--454. Springer, 2019.

\bibitem{Tsuch}
Masashi Tsuchida, Fukuhito Ooshita, and Michiko Inoue.
\newblock Byzantine-tolerant gathering of mobile agents in arbitrary networks
  with authenticated whiteboards.
\newblock {\em {IEICE} Transactions}, 101-D(3):602--610, 2018.

\end{thebibliography}
\end{document}